\title{\model: Optimal and Efficient Multi-Agent Path Finding with Strategic Agents for Social Navigation}
\author {
    Rohan Chandra, Rahul Maligi,
    Arya Anantula,
    Joydeep Biswas\\{\small \texttt{\{rchandra, maligir,
    anantula.arya, joydeepb\}@cs.utexas.edu}}
    \thanks{All authors are with the Department
of Computer Science, University of Texas, Austin. Corresponding e-mail: rchandra@utexas.edu.}}
\newcommand{\model}{\textsc{SocialMapf}\xspace}
\newtheorem{theorem}{Theorem}[section]
\newtheorem{definition}{Definition}[section]
\begin{document}

\maketitle
\thispagestyle{empty}
\pagestyle{empty}

\begin{abstract}
We propose an extension to the MAPF formulation, called \model, to account for private incentives of agents in constrained environments such as doorways, narrow hallways, and corridor intersections. \model is able to, for instance, accurately reason about the urgent incentive of an agent rushing to the hospital over another agent's less urgent incentive of going to a grocery store; MAPF ignores such agent-specific incentives. Our proposed  formulation addresses the open problem of optimal and efficient path planning for agents with private incentives. To solve \model, we propose a new class of algorithms that use mechanism design during conflict resolution to simultaneously optimize agents' private local utilities and the global system objective. We perform an extensive array of experiments that show that optimal search-based MAPF techniques lead to collisions and increased time-to-goal in \model compared to our proposed method using mechanism design. Furthermore, we empirically demonstrate that mechanism design results in models that maximizes agent utility and minimizes the overall time-to-goal of the entire system. We further showcase the capabilities of mechanism design-based planning by successfully deploying it in environments with static obstacles. To conclude, we briefly list several research directions using the \model formulation, such as exploring motion planning in the continuous domain for agents with private incentives.
\end{abstract}
\section{Introduction}
\label{sec: introduction}
The multi-agent path finding (MAPF) problem corresponds to efficiently finding optimal and collision-free paths (defined as a set of waypoints or coordinates) for $k>1$ agents in discrete $2-$D space-time. Given a bi-connected graph $\mathcal{G}$, an initial configuration $S^0$, containing the starting positions of all agents and a final configuration $G$, representing the goals of all the agents, the MAPF objective is to find the sequence of configurations, $\Gamma = \{S^0, S^1, \ldots, S^{t-1}, G\}$, where $\Gamma$ optimizes a global objective function such as the sum-of-costs or makespan.

MAPF is a fundamental problem in robotics and control and appears frequently in many real world applications such as vehicle routing~\cite{veloso2015cobots,mavrogiannis2022b,chandra2022game}, warehouse robotics~\cite{warehouse}, airport towing~\cite{airport}, and so on. Silver~\cite{silver2005cooperative} gave the first formal algorithms to solve the original MAPF problem; since then, MAPF research has included improving efficiency within bounded sub-optimality conditions, developing anytime formulations~\cite{vedder2021x}, and proposing various heuristics along these lines to improve performance. As a result of these efforts, current state-of-the-art methods~\cite{CBS, ICTS, boyarski2015icbs} and their variants can solve a wide range of simulated MAPF environments with static and dynamic obstacles. 
% Classical MAPF solvers include \cite{CBS, ICTS, boyarski2015icbs} and their extensions. 
Since its introduction in~\cite{silver2005cooperative}, however, MAPF has always modeled cooperative agents that strive to optimize the global objective function. Consequently, planning under the classical MAPF formulation is typically centralized where agents simply execute the solution to the common global objective function.
% Real world social navigation scenarios, however, are significantly more challenging than simulated environments due to human interactions~\cite{SocialMAPF_as_open_problem}.
% Human-agents possess privately-held incentives that dictate their actions. Moreover, even if these incentives are estimated or approximated using heuristics, there is no guarantee that human agents will act rationally and adhere to their own incentives. To better visualize the limitations of the current MAPF formulation, consider the narrow hallway scenario in Figure~\ref{fig: demo_figure}. Because agents are strategic with privately-held incentives, existing search-based MAPF solvers are unable to determine the future actions of the agents resulting in a collision at the choke point.
\begin{figure}[t]
    \centering
    \includegraphics[width=\columnwidth]{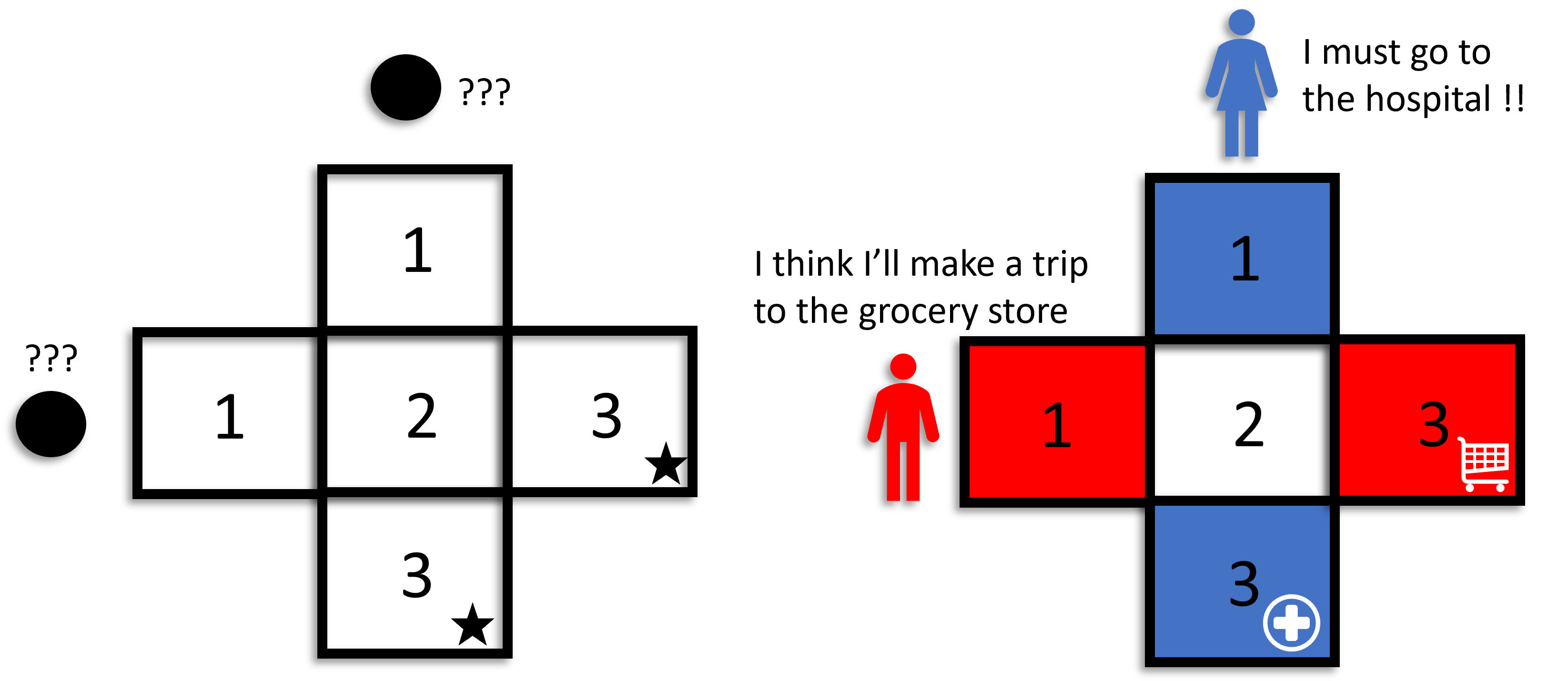}
    \caption{\textit{(left)} Classical MAPF with cooperative agents and \textit{(right)} MAPF in social navigation scenarios with strategic agents. Strategic agents have private incentives and strive to optimize individual utility functions, rather than a common global objective function.}
    \label{fig: alice_bob}
    \vspace{-5pt}
\end{figure}
\begin{figure*}[t]
\centering
   \begin{subfigure}[h]{0.19\textwidth}
    \includegraphics[width=\textwidth]{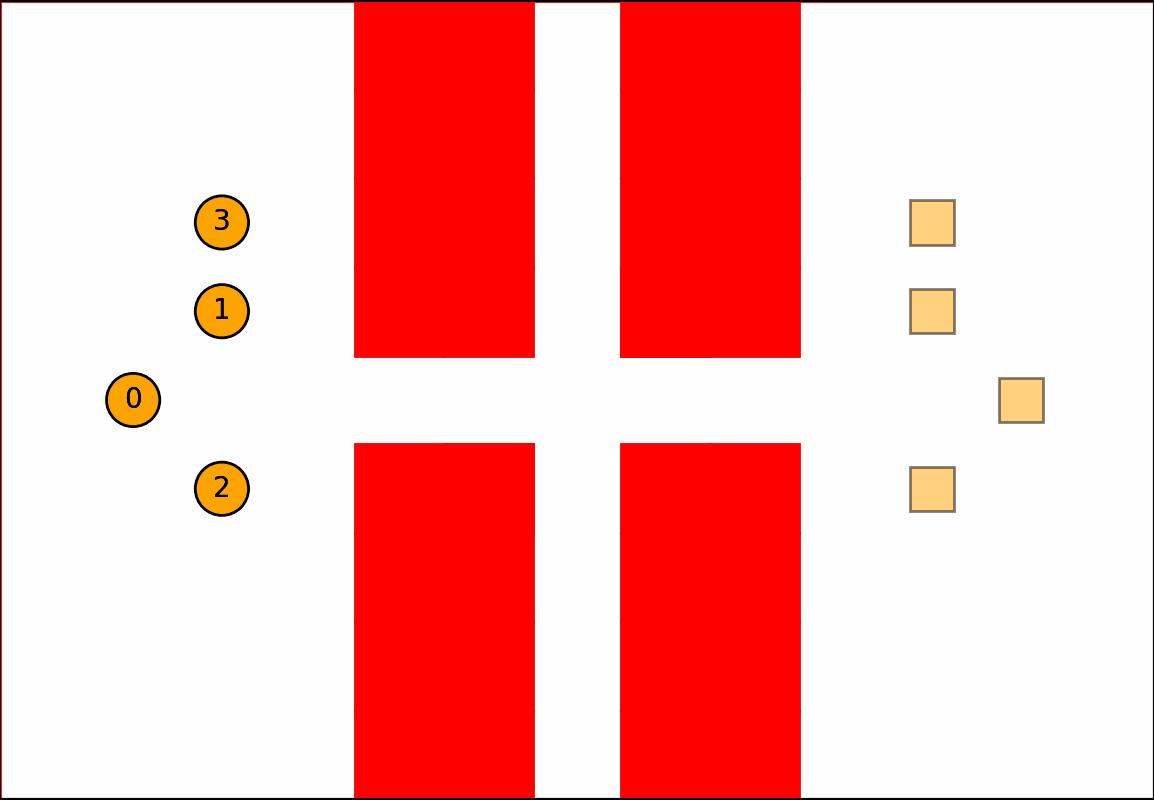}
    \caption{}
    \label{fig: doorway_agents}
  \end{subfigure}
 \begin{subfigure}[h]{0.19\textwidth}
    \includegraphics[width=\textwidth]{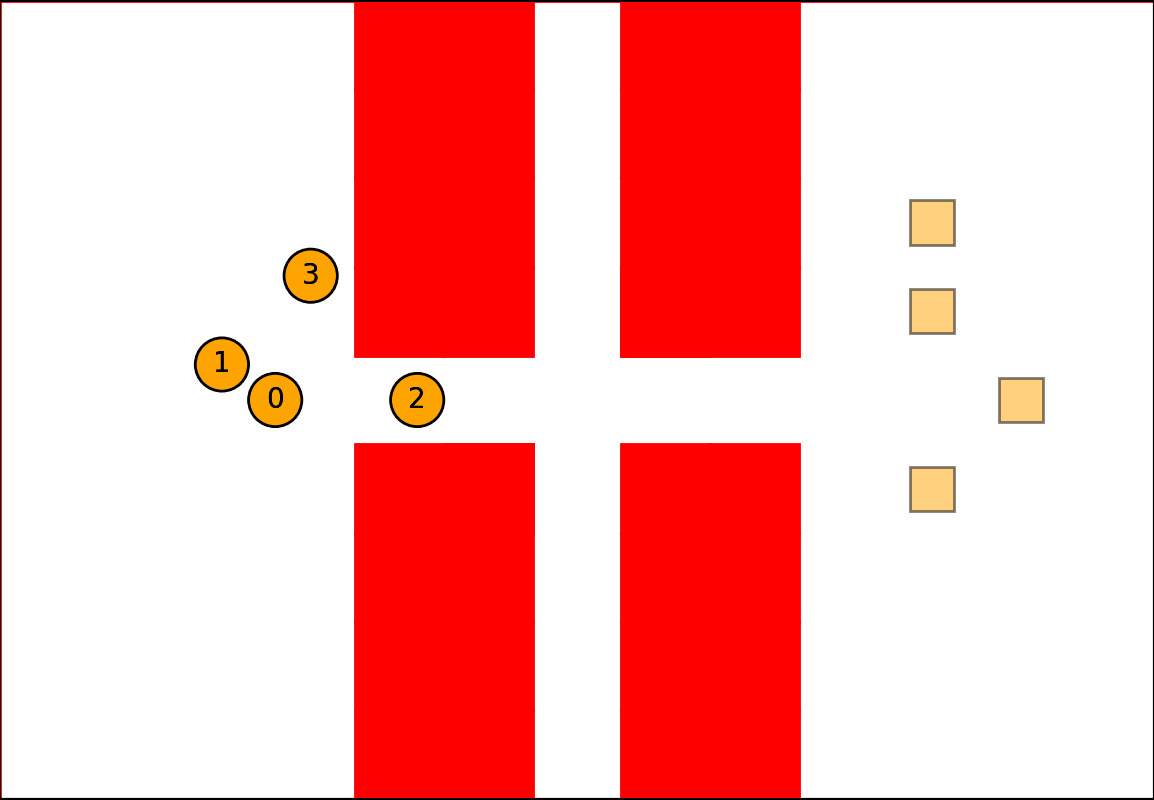}
    \caption{}
    \label{fig: hallway_agents}
  \end{subfigure}
\begin{subfigure}[h]{0.19\textwidth}
    \includegraphics[width=\textwidth]{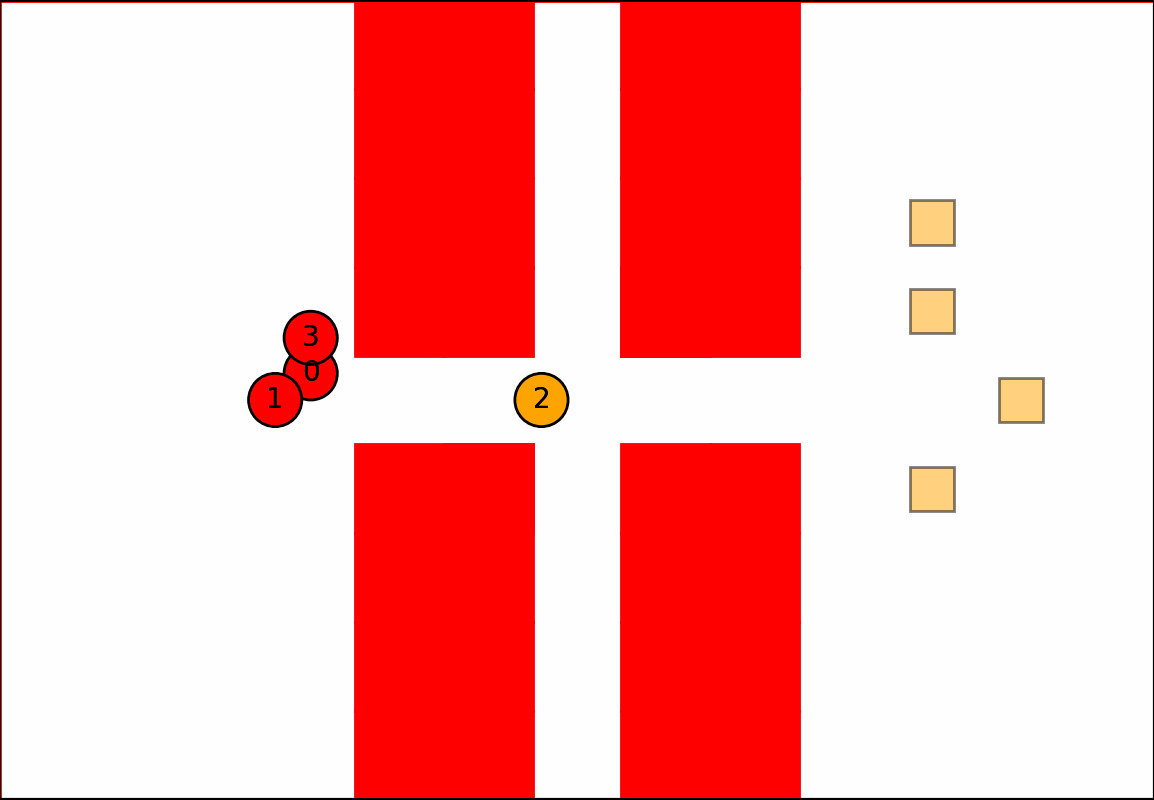}
    \caption{}
    \label{fig: intersection_agents}
  \end{subfigure}
\begin{subfigure}[h]{0.19\textwidth}
    \includegraphics[width=\textwidth]{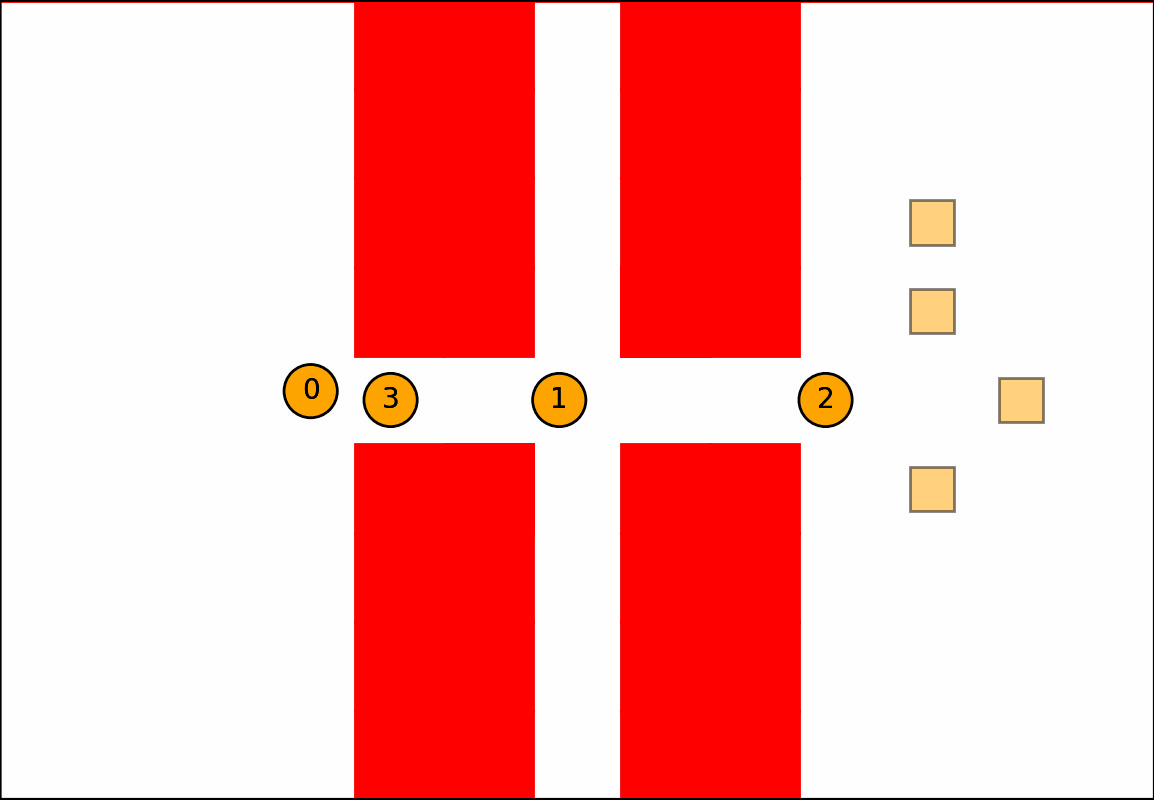}
    \caption{}
    \label{fig: doorway_gapsize}
  \end{subfigure}
\begin{subfigure}[h]{0.19\textwidth}
    \includegraphics[width=\textwidth]{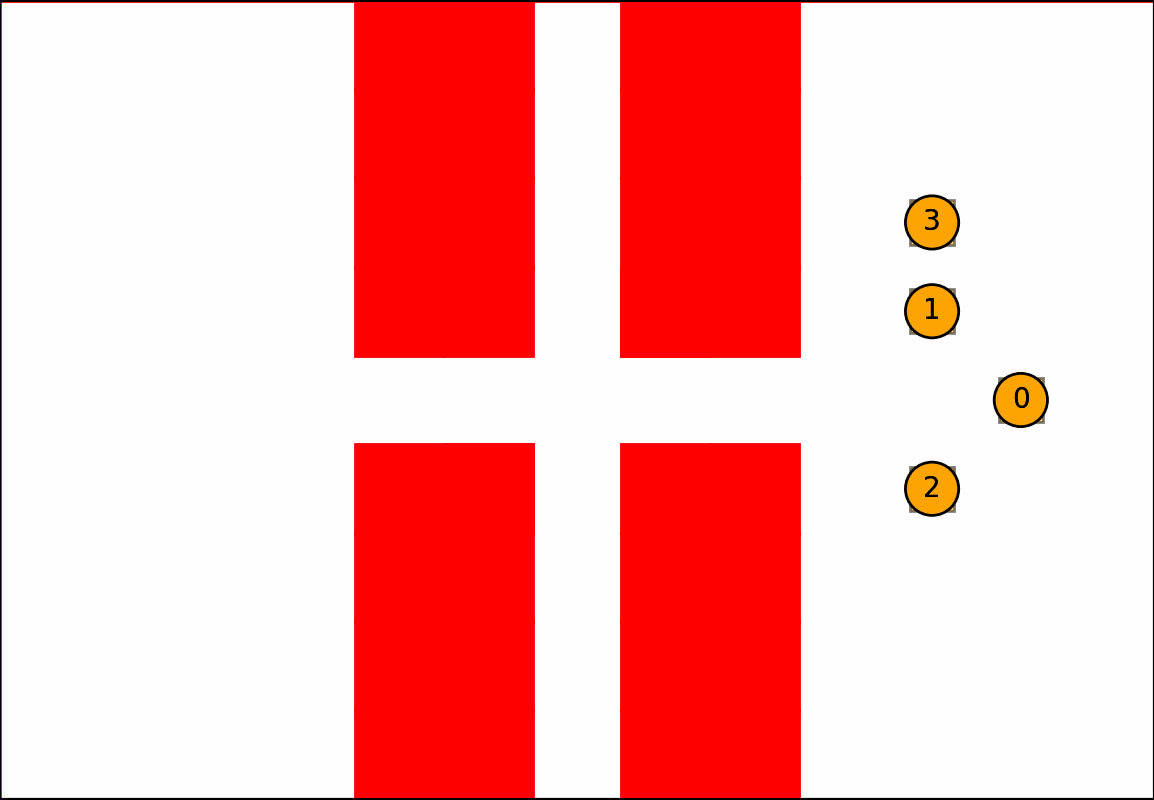}
    \caption{}
    \label{fig: hallway_gapsize}
  \end{subfigure}
\caption{Multi-agent path finding with strategic agents in a narrow hallway scenario. The yellow circles and squares denote the agents and their respective goal locations. The red regions are obstacles.}
  \label{fig: demo_figure}
  \vspace{-5pt}
\end{figure*}  

In real world social navigation scenarios, however, agents are not cooperative; they are strategic, or self-interested, possessing privately-held incentives that dictate their actions~\cite{marl, chandra2022game, forecasting}. These agents no longer share a common global objective function and instead strive to optimize utility functions unique to each agent.
% In social navigation scenarios with humans, agents are uncertain about other agents' incentives or priorities. The area of planning with uncertainty (which typically deals with uncertainty in position or velocity or both) provides some tools to model the uncertainty in agent incentives; the simplest one being to sample incentives from a Gaussian distribution. In the case of search-based MAPF solvers--a popular class of techniques--the simplest way model uncertainty in both the planning and execution stages would be to normally distribute the edge costs in the planning stage and allow multi-hop path traversal during the execution stage. Under such condition, however, MAPF solvers are unable to guarantee optimal and collision-free paths. We refer the reader to~\cite{chung2019risk} for theoretical and empirical verification of the former. For the latter, we provide a brief explanation next.
Consider Figure~\ref{fig: alice_bob} depicting two different MAPF setups at a particular time $t$. On the left, we show the classical MAPF setting with two cooperative agents represented as black discs with their respective goal locations marked by $\bigstar$'s. On the right, two strategic agents, Bob (red) and Alice (blue), have distinct objectives. Alice has a higher priority of going to the hospital while Bob is making a routine trip to the grocery store. In both cases, agents can fully observe the state space and step up to $3$ tiles. In the MAPF setting, the cooperative agents can additionally choose to reveal their step size, or \textit{incentive}. On the other hand, Alice's step value is not known to Bob and vice-versa, on account of them being strategic and self-interested. The scenario on the right is what we will call a \model scenario (defined formally in Section~\ref{sec: approach}).

Centralized MAPF solvers will detect a collision at $(2,2)$ and select actions for each agent that prevent the collision. But centralized solvers cannot solve \model since strategic agents will typically not reveal their incentives. Therefore, in the \model setting in the example above, a centralized planner will yield $9$ possible pairs of actions, $(1,1), (1,2), \ldots, (3,3)$. Out of these choices, the following four--$(2,2), (2,3), (3,2), (3,3)$--lead to a collision while the remaining five are collision-free yielding a collision-free interaction with probability $\frac{5}{9}$. Sensing that the probability of \textit{not} colliding is higher than that of colliding ($>\frac{4}{9}$), centralized MAPF solvers will choose a step value of $3$ for both Alice and Bob (since that reduces the global sum-of-costs) causing them to collide. Figure~\ref{fig: demo_figure} shows an empirical simulation of a centralized MAPF solver for strategic agents.

Solving a \model problem implies finding a sequence $\Gamma_\textrm{\textsc{social}} = \{S^0, S^1, \ldots, S^{t-1}, G\}$ that optimizes both the global objective function as well as the utility functions corresponding to each strategic agent. In this work, we cast this problem through the lens of mechanism design where the designer of the mechanism proposes a set of rules that simultaneously serves to optimize both sets of objectives.

% To the best of our knowledge, generalizing classical MAPF to real world social navigation scenarios remains an open problem~\cite{SocialMAPF_as_open_problem}. The original MAPF formulation suffers in real world scenarios due to the following problems:

% \begin{itemize}
    % \item \textit{Poor scalability:} Social navigation environments compel agents to actively negotiate, cooperate, and compete for space and even for the same goals, resulting in a large number of conflicts. And because the runtime of existing MAPF solvers is exponential in the number of conflicts, the latter scale poorly in social navigation scenarios.

    % \item \textit{Lack of collision-free and optimality guarantees:} The original MAPF problem abstracts agents as cooperative discs or points. On the other hand, strategic agents possess private incentives to reach their goals. The resulting uncertainty in their incentives and actions leads to a loss of optimality guarantees in MAPF solvers.

% \end{itemize}

\subsection*{Main Contributions}

% In this paper, we generalize the original MAPF problem to social navigation scenarios such as doorways, hallways, elevators, and corridor intersections with strategic agents with privately-held incentives. We also propose a new technique of solving \model using mechanism design.

\begin{enumerate}
    \item Our main contribution is a new formulation called \model that extends MAPF to include strategic agents that strive to optimize local utilities as opposed to a global objective function. 
    
    \item We also introduce a class of algorithms to solve \model using mechanism design. By reducing collision resolution to a strategy-proof auction, we produce an optimal solution to \model, where optimality is defined in Definition~\ref{def: optimal_gamma}. 
    
\end{enumerate}

\noindent \model addresses several longstanding issues in MAPF such as:

\begin{itemize}
    \item \textbf{Decentralized planning among strategic agents:} In the most general sense, agents' incentives, which determine the next state or action, are hidden from other agents. Decentralized planning is often a desirable class of solutions due to their low computational requirements but require access to the agent incentives. Auction theory provides a solution for decentralized planning among agents with private incentives.
    
    % \item \textbf{Incorporating kinodynamic constraints:} Classical MAPF does not model kinodynamic control of the agents. Our \model formulation can extend to continuous motion planning with kinodynamic constraints.
    
    \item \textbf{Optimality and efficiency guarantees:} 
    \model approaches are either optimal or efficient, but not both. One of the primary research areas in \model is to develop efficient solvers within bounded sub-optimality. We propose the first optimal \textit{and} efficient \model solver for real world environments with strategic agents.
\end{itemize}

In the remainder of this paper, we present the \model framework in Section~\ref{sec: framework} and propose our algorithm for solving \model in Section~\ref{sec: approach}. We present our experiments in and analyze preliminary results in Section~\ref{sec: experiments}. Finally, we conclude the paper and discuss limitations and future directions of research in Section~\ref{sec: conclusion}.
\section{Related Work}
\label{sec: related}

There is a rich history behind MAPF. Since it is impractical to discuss it all here, we refer the interested reader to a recent survey~\cite{mapf-overview}. Below, we highlight the three broad categories of solutions to discrete MAPF.

The first class of discrete MAPF algorithms consist of fast (worst-case time complexity is polynomial in the size of the graph) and complete solvers. A general template of this category of methods can be represented by the Kornhauser’s algorithm~\cite{kornhauser1984coordinating}, with a $\mathcal{O}(\lvert V \rvert^3)$ time complexity. The Push-and-Swap algorithm~\cite{pas} and its variants Parallel Push-and-Swap~\cite{ppas} and Push-and-Rotate~\cite{pnr}, improve upon Kornhauser’s algorithm in instances when there are at least two unoccupied vertices in the graph. The BIBOX algorithm is also fast and complete under these
conditions~\cite{bibox}.

The second category of methods trade efficiency for optimality guarantees. While the algorithms belonging to the previous category are fast and, under certain conditions, complete, they do not provide any guarantee regarding the quality of the solution they return. In particular, they do not guarantee that the resulting solution is optimal, either w.r.t. sum-of-costs or makespan. Extensions of A$^*$
include algorithms that search the search space using a variant of the A$^*$ algorithm. The Increasing Cost Tree Search~\cite{ICTS} splits the MAPF problem into two problems: finding the cost added by each agent, and finding
a valid solution with these costs. Finally, the Conflict-Based Search (CBS)~\cite{CBS} family solves MAPF by
solving multiple single-agent pathfinding problems. To achieve coordination, specific constraints are added incrementally to the single-agent pathfinding problems, in a way that verifies soundness, completeness, and optimality

A third category of methods consist of bounded sub-optimal
algorithms that lie in the range between the first and the second categories. An bounded sub-optimal algorithm is an algorithm that returns a solution whose cost is at most $1 + \epsilon$  times the cost of an optimal solution. Enhanced CBS (ECBS)~\cite{ecbs} and EECBS~\cite{li2021eecbs} are approximately optimal MAPF algorithms that are based on CBS.

\section{\model: Framework}
\label{sec: framework}

The two characteristics that underlie social navigation, missing from classical MAPF formulation are that $(i)$ agents have different incentives and $(ii)$ these incentives are private. Since the notion of incentives is typically ill-defined, it is best to illustrate it with examples.

\noindent \textit{Example $1$:} consider simultaneously an ambulance carrying a critical patient to the hospital and a family making a regular trip to the nearby grocery store. The ambulance is said to possess a higher ``incentive'' or priority (dictated by the urgency of the situation) to reach their goals than the family en route to the grocery store.

\noindent \textit{Example $2$:} next, consider two pedestrians arriving almost simultaneously at an indoor corridor intersection scenario. Here, unlike the previous case, the incentives of the pedestrians are hard to ascertain, but may be inferred from data using indicators like velocity, acceleration, etc.

In order to extend MAPF to social navigation, we first need to model these incentives and then figure out how to perform MAPF when the incentives of the other agents are hidden.  Table~\ref{tab:my_label} compares the MAPF and \model formulations.

\subsection{Problem formulation}

Given an arbitrary graph\footnote{In this work, we assume a $4$-connected $M\times N$ bi-connected graph.}, $\mathcal{G}$, which is discrete in space and time, at any particular time $t$, denote by $S_t$ the current configuration of $k$ \textit{strategic} agents. A strategic agent is defined as follows,

\begin{definition}
\textbf{Strategic Agent:} A strategic agent $a_i$ is a MAPF agent with a private risk function $\varphi_i:\Gamma \longrightarrow \mathbb{R}$, where $v_i = \varphi_i(\Gamma)$ encodes the $i^\textrm{th}$ agent's risk tolerance towards $\Gamma$.
\label{def: strategic_agent}
\end{definition}

\noindent We refer to $v_i$ as the private incentive of $a_i$. The \model goal is to output an optimal sequence of configurations, $\Gamma_\textsc{opt} = \{S^0, S^1, \ldots, S^{t-1}, G\}$, where $S^0$ and $G$ are the initial and final configurations, such that the $\Gamma$ satisfies a precise set of conditions that are explained later. A configuration $S_t$ at any time $t$ is specified by a list of discrete state space vectors, $s^t_i = [x^t_i, y^t_i, v_i]^\top \in \mathbb{Z}_{+}\times\mathbb{Z}_{+}\times\mathbb{Z}_{+}$ for $1\leq i \leq k$ where $x_i,y_i$ corresponds to a vertex occupied by $a_i$ in $\mathcal{G}$ and $v_i$ is the private incentive of $a_i$ encoding the priority toward its goal $g_i$. 

The discrete action space, identical for each agent, is given by $\{ \texttt{up, down, left, right, wait}\}$. At a time-step $t$ and from current vertex $[x_i, y_i]^\top$, an agent $a_i$ can select from one of these actions which we denote by $u^t_i$; for each of the first $4$ actions, each agent in \model can additionally choose to step up to $v_i$ vertices at a time. We denote the step value by $0 \leq \tau \leq v_i$. The state transition function for an agent can be specified by $s^{t+1}_i = [x^t_i \pm \tau, y^t_i, v_i]^\top$ if $u^t_i \in \{\texttt{right, left}\}$ or $s^{t+1}_i = [x^t_i, y^t_i \pm \tau, v_i]^\top$ if $u^t_i \in \{\texttt{down, up}\}$ or $s^{t+1}_i = [x^t_i, y^t_i, v_i]^\top$ if $u^t_i \in \{\texttt{wait}\}$.

% \model introduces the notion of time reward function, $\alpha_i (u_i) = \frac{1}{t_{g,i}}$, for $i = 1,2, \ldots, k$ where $t_{g,i}$ is the time in which $a_i$ reaches its goal. 
\begin{table}[t]
    \centering
    \resizebox{\columnwidth}{!}{
    \begin{tabular}{lcc}
    \toprule
    Parameter & MAPF     &  \model\\
    \midrule
    % Agent description     &  radius ($r$) & incentive ($v$)\\
    State space    & $[x^t_i, y^t_i]^\top \in \mathbb{Z}_{+}^2$ & $[x^t_i, y^t_i, v_i]^\top \in \mathbb{Z}_{+}^3$\\    Action space     & \{up, down, left, right, wait\} & \{up, down, left, right, wait\}\\
    Reward     & $-$ & $\alpha_i(u^t_i) = (t_{g,i})^{-1}$  \\
    Transition function     & $\lvert s\prime - s\rvert = 1$ & $\lvert s\prime - s\rvert = \tau$ \\
    Grid specifications     & $M\times N$ & $M\times N$\\
    Global Optimality & SoC, Makespan & Social Welfare\\
    Local Optimality & $-$ & $\Phi_i(u_i, u_{-i})$\\
    \bottomrule
    \end{tabular}
    }
    \caption{Comparing the problem descriptions of MAPF and \model. $t$ represents the current time-step and $v_i$ denotes the private incentive of $a_i$.}
    \label{tab:my_label}
    \vspace{-5pt}
\end{table}

% The action space consists of the same actions as in the MAPF formulation, but each agent $a_i$ in \model can additionally choose to step up to $v_i$ tiles at a time. Formally, the action space $\mathcal{U}^t_i$ corresponding to the $i^\textrm{th}$ agent at time $t$ is $\mathcal{U}^t_i = \{ \texttt{up, down, left, right}\}\times U[1,\lfloor v \rfloor]$ ($u^2_1 = \texttt{up}\times 3$ means the agent $1$ moved up $3$ tiles at $t=2$ . Here we assume that $v_2 \geq 3$). Assuming the state of the $i^\textrm{th}$ agent at time $t$ is denoted by $s^t_i = [x^t_i, y^t_i]^\top \in \mathbb{R}^{M\times N}$, the transition function is given by $\lvert \mathcal{O} - s\rvert = U[1, \lfloor v \rfloor]$.

Next, we derive the global objective and agent-specific utility functions in \model. A standard global objective function for the classical MAPF problem is the sum-of-costs (SoC) given by, 

\begin{equation}
    \mathcal{F}(\Gamma) = \sum_{i=1}^{k} t_{g,i}
    \label{eq: SoC}
\end{equation}

 \noindent where $ t_{g,i}$ is the time taken by $a_i$ to reach its goal $g_i$. The classical MAPF goal is to find the optimal $\Gamma$ that minimizes $\mathcal{F}$. In our \model setting, minimizing Equation~\ref{eq: SoC} \textit{alone} does not yield an optimal solution as we need to simultaneously maximize the agents' utility. Each agent $a_i$ in \model receives a reward, $\alpha_i(u_i) = (t_{g,i})^{-1}$, on reaching the goal in time $t_{g,i}$ by executing action $u_i$. Furthermore, since agents are strategic and cannot observe the incentives of other agents, we introduce the notion of a penalty function, $p_i(\alpha_{-i}) \in \mathbb{R}$, incurred by each agent by executing $u_{-i}$. The utility function of $a_i$ can be written as,

\begin{equation}
     \Phi_i (u_i, u_{-i}) =  \alpha_i(u_i)  - p_i(\alpha_{-i}))
 \label{eq: agent_utility}
\end{equation}

\noindent At this point, we can state the definition for $\Gamma_\textsc{opt}$:

\begin{definition}
\textbf{$\Gamma_\textsc{opt}$}: A sequence of transitions, $\Gamma = \{S^0, S^1, \ldots, S^{t-1}, G\}$ is optimal when the global sum-of-cost (Equation~\ref{eq: SoC}) is minimal and local agent utilities (Equation~\ref{eq: agent_utility}) are maximal.
\label{def: optimal_gamma}
\end{definition}

\noindent Finally, we define the model for conflict resolution in \model. A conflict is defined by the tuple $\langle \mathcal{C}^t_{\mathcal{O}}, \mathcal{O}, t \rangle$, which denotes a conflict between agents belonging to the set $\mathcal{C}^t_{\mathcal{O}}$ at time $t$ over the vertex $\mathcal{O}$ in $\mathcal{G}$. Naturally, agents must either find an alternate non-conflicting path or must move through $\mathcal{O}$ in a turn-based ordering. A turn-based ordering is defined as follows,

\begin{definition}
\textbf{Turn-based Orderings ($\sigma$): } A turn-based ordering is a permutation $\sigma:\mathcal{C}^t_{\mathcal{O}} \rightarrow [1,k]$ over the set $\mathcal{C}^t_{\mathcal{O}}$. For any $i,j \in [1,k]$, $\sigma(a_i) = j$, equivalently $\sigma_i = j$, indicates that $a_i$ will move on the $j^\textrm{th}$ turn.
\label{def: turn_based_ordering}
\end{definition}

\noindent For a given conflict $\langle \mathcal{C}^t_{\mathcal{O}}, \mathcal{O}, t \rangle$, clearly, different permutations results in different configurations $S^{t+1}$ in $\Gamma$. Therefore, we have $\sigma \vdash \Gamma$, where $X \vdash Y$ denotes $Y$ can be obtained from the set of statements in $X$.

The conventional wisdom dictates choosing $\sigma$ randomly if agents arrive at the conflict at the same time or execute first-in first-out if they arrive asynchronously. And while $\sigma \vdash \Gamma_\textsc{opt}$ for any $\sigma$ in classical MAPF, randomly scheduling agents to move through $\mathcal{O}$ is sub-optimal, as we show in Section~\ref{sec: experiments}. The reader may further recall example $1$--opting for random ordering allows the ambulance to be delayed. In \model, we seek a unique optimal ordering, $\sigma_{\textsc{opt}}\vdash \Gamma_\textsc{opt}$, where $\Gamma_\textsc{opt}$ simultaneously minimizes Equation~\ref{eq: SoC}  as well as maximizes Equation~\ref{eq: agent_utility} for all $i$.
% In other words,

% \begin{definition}
% \textbf{Optimal Turn-based Ordering ($\sigma_{\textsc{opt}}$): } A turn-based ordering, $\sigma$, over a given set $\mathcal{C}^t_{\mathcal{O}}$ is optimal if it is incentive compatible and welfare maximizing.
% \label{def: optimal_turn_based_ordering}
% \end{definition}

% \noindent An optimal ordering can be cast as an allocation problem where the goal is to allocate each agent a position in the order. Allocation problems involving agents with private incentives can be solved efficiently via auction theory. A \model problem can be specified by the tuple: ($g, h, \mathcal{G}, S_0, G$), where $g, h$ represent the allocation and payment rules specify an auction. A solver for \model that generates sequences $\Gamma$ is optimal when the allocation rule $g$ and the payment rule $h$ are incentive compatible and welfare maximizing~\cite{chandra2022gameplan}.

\section{Proposed Algorithm to Solve \model}
\label{sec: approach}

\subsection{Background on Auction Theory}

Auctions are a game-theoretic mechanism that are used extensively in economic applications like online advertising~\cite{roughgarden2016twenty}. In an auction, there are $m$ items to be allocated among $k$ agents. Each agent $a_i$ has a private valuation $v_i$ and submits a bid $b_i$ to receive at most one item dictated by an allocation rule $g_i$. A strategy is defined as an $n$ dimensional vector, $b = (b_i \cup b_{-i})$, representing the bids made by every agent. $b_{-i}$ denotes the bids made by all agents except $a_i$. The quasi-linear utility $u_i$ incurred by $a_i$ is given as follows,

\begin{equation}
 \Phi_i (b) =  v_i g_i(b) - h_i
 \label{eq: utility_template}
\end{equation}

\noindent In the equation above, the quantity on the left represents the total utility for $a_i$ which is equal to gain value of the allocated goods $v_ig_i(b)$ minus a payment term $h_i$. We refer the reader to Chapter $3$ in~\cite{roughgarden2016twenty} for a derivation and detailed analysis of Equation~\ref{eq: utility_template}. The performance of an auction is measured by the social welfare of the entire system comprising of the $k$ agents, which is defined as
\begin{equation}
\mathcal{W}(b) = \sum_{i=1}^{k} v_i g_i(b)
\label{eq: welfare_template}
\end{equation}

The primary objective in auction theory is to determine an allocation rule $g$ and a payment rule $h$ such that there exist $b^*$ for which both $u_i(b)$ for all $i$ and $\mathcal{W}(b)$ are maximized. Unfortunately, simply establishing existence is insufficient; we should be able to compute or determine $b^*$. A strategy-proof or incentive-compatible auction yields $b_i^* = v_i$.

\subsection{Algorithm}

We leverage the global optimality (Equation~\ref{eq: SoC}) of search-based MAPF solutions such as CBS and, in this section, extend their core functionality to include local agent-level optimality (Equation~\ref{eq: agent_utility}).

Formally, we identify two phases of a \model solver. The top level phase is the motion planning phase where agents make progress by stepping towards their goal states along a pre-computed trajectory cost map. The bottom level phase is the conflict resolution phase which resolves conflicts over shared goal states. Since search-based methods are susceptible to uncertainty in the edge costs~\cite{chung2019risk} and, as we show later in Section~\ref{subsec: exp-1}, result in collisions and increased time-to-goal, we rely on variants of artificial potential field (APF) methods to step towards the goal. The conflict resolution phase employs an auction to determine an optimal priority ordering in which agents should pass through the conflicted states. We describe these two stages below:
\subsubsection{Phase $1$: The motion planning phase}

We begin by creating a potential map $\mathcal{M}_c$ corresponding to each goal state $c$. Each tile (indicated by $i,j$ denoting the $i^\textrm{th}$ row and $j^\textrm{th}$ column) in $\mathcal{M}_c$ is assigned a potential value using the $A^*$ algorithm that encodes the number of tiles between the $\mathcal{G}[i,j]$ and the goal $c$. Although this formulation is designed for any number of goals, we assume $\lvert c \rvert = 1$ for simplicity.

In this work, we consider the simple one step look-ahead approach where the motion planner only plans ahead for one time-step (but may plan over multiple tiles in space). For an agent at any time step $t$ and state $s$, the motion planner steps towards a tile $\mathcal{O}$ that has a lower potential value. We assume agents always move at every time step (unless they are forced to wait during the conflict resolution phase). If there is a conflict on some $\mathcal{O}$, then the motion planner first checks if either agent can be assigned a different target state. If so, one of the agents is reassigned and the other agent is allocated the original target tile. If, however, neither agent can be reassigned, then we enter the conflict resolution phase.

The main advantage of using APF techniques is that $(i)$ they escape the exponential complexity incurred by search-based approaches and $(ii)$ we can avoid dealing with uncertainty in phase $1$ by relegating the entire responsibility of handling uncertainty to phase $2$. 

% The limitation of APF methods, however, is that they are susceptible to getting stuck in local minima and we use a heuristic strategy to escape from these minima scenarios. When an agent is stuck, it simply identifies a set of intermediate goal states that are states from which the true goal is reachable. After the agent has identified such states, it will construct a new potential map $\mathcal{\hat M}_i$ that guides the agent to the intermediate goal. Once the agent reaches an intermediate goal state, it will revert back to the original map $\mathcal{M}$.

% \begin{algorithm}[h]
%     \SetKwInOut{Input}{Input}
%     \SetKwInOut{Output}{Output}
% \SetKwComment{Comment}{$\triangleright$\ }{}
% \SetAlgoLined
% \Input{$S_0, G, \mathcal{G}$ }
% \Output{ $\Gamma = \{S_0, S_1, \ldots, S_{t-1}, G\}$}
% $\mathcal{M}\gets \mathcal{G}(G)$\\
% Initialize $S_0$\\
% \For {$t > 0$}{
% \If{$S_t = G$}{
% return\\

% \Else{
% $S_t \gets \textrm{Step}(S_{t-1})$\\
% }
% }
% }
% \caption{Phase $1:$ Motion Planning}
% \label{alg: ex}
% % \vspace{-10pt}
% \end{algorithm}

\subsubsection{Phase $2$: The conflict resolution phase}

During a conflict $\langle \mathcal{C}^t_{\mathcal{O}}, \mathcal{O}, t \rangle$, suppose $\mathcal{C}^t_{\mathcal{O}} = \{ a_1, a_2 \}$. Then, by design, either $u^t_1 \in \{ \texttt{up, down, left, right}\}, u^t_2 = \texttt{wait}$ or $u^t_2 \in \{ \texttt{up, down, left, right}\}, u^t_1 = \texttt{wait}$. A particular alternative yields a specific turn-based ordering $\sigma$. In the above example, $\sigma = \sigma_1\sigma_2$ where $\sigma_1=1,\sigma_2=2$ or $\sigma_1=2,\sigma_2=1$ Crucially, with cooperative agents as in classical MAPF, every value of $\sigma$ results in the the same outcome, but this is not so in \model, where we turn to mechanism design in order to determine $\sigma_\textsc{opt}$.

More specifically, we run an auction, $(g,h)$, with a given allocation rule $g$ and payment rule $h$. Since we are operating in simulation, we assume agents have some form of ``digital currency''. Agents $a_i$ bid on the values of $\sigma_i$ with the following allocation rule $g$:

\begin{enumerate}
    \item Given set $\mathcal{C}^t_{\mathcal{O}}$ of agents conflicted over state $\mathcal{O}$ at time $t$ and their corresponding bids $b_i$, initialize $q \gets 0$.  \label{algo: Algorithm}
    \item Sort $\mathcal{C}^t_{\mathcal{O}}$ in decreasing order of $b_i$.
    \item Do the following for $\lvert \mathcal{C}^t_{\mathcal{O}} \rvert$ steps:
    \begin{enumerate}
        \item Increment $q$ by $1$.
        \item Let $a_i$ be the the first element in $\mathcal{C}^t_{\mathcal{O}}$.
        \item Set $\sigma_i = q$.
        \item $\mathcal{C}^t_{\mathcal{O}} \gets \mathcal{C}^t_{\mathcal{O}} \setminus \{a_i\}$.
    \end{enumerate}
    \item Repeat steps $1-3$ while $\mathcal{C}^t_{\mathcal{O}}$ is non-empty. 
    \item Return 
    \begin{equation}
        \sigma_\model = \sigma_1\sigma_2 \ldots \sigma_{\lvert \mathcal{C}^t_{\mathcal{O}} \rvert}
        \label{eq: g}
    \end{equation}
\end{enumerate}

\noindent To summarize the algorithm, the agent with the highest bid is allocated the highest priority and is allowed to move first, followed by the second-highest bid, and so on. Upon moving on the $q^\textrm{th}$ turn, the agent receives a reward of $\alpha_q$ and makes a payment $h_i$ according to the following payment rule,
\begin{figure*}[t]
\centering
\begin{subfigure}[h]{0.325\textwidth}
    \includegraphics[width=\textwidth]{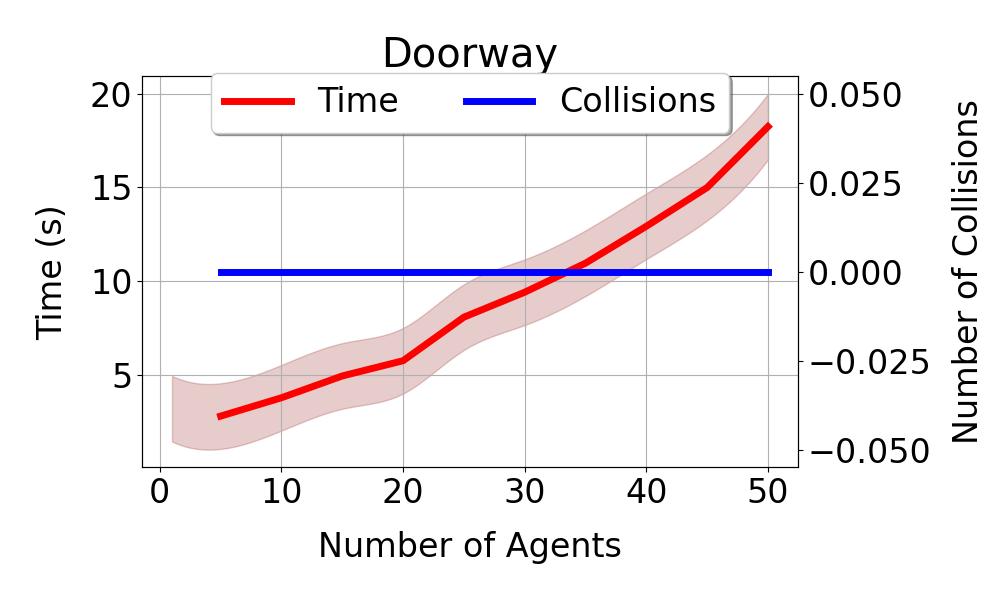}
    \caption{}
    \label{fig: doorway_socialmapf}
  \end{subfigure}
\begin{subfigure}[h]{0.325\textwidth}
    \includegraphics[width=\textwidth]{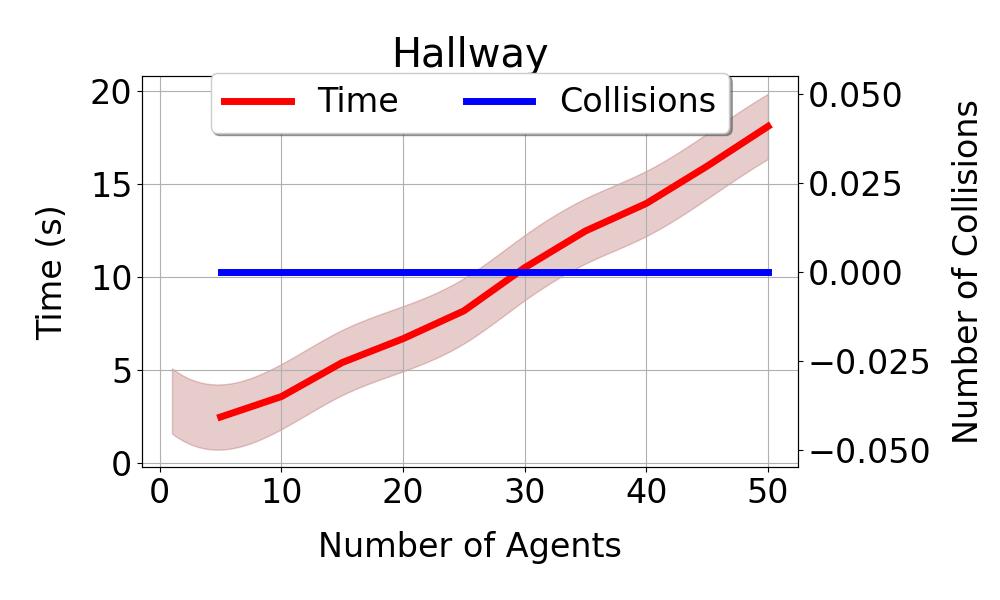}
    \caption{}
    \label{fig: hallway_socialmapf}
  \end{subfigure}
\begin{subfigure}[h]{0.325\textwidth}
    \includegraphics[width=\textwidth]{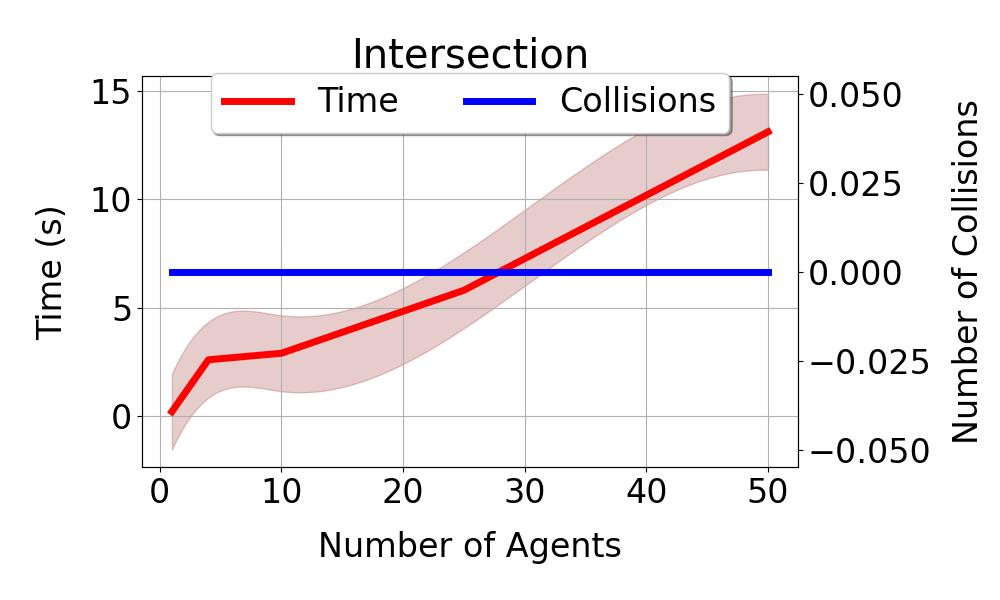}
    \caption{}
    \label{fig: intersection_socialmapf}
  \end{subfigure}
  \caption{\textbf{Auction-based planning in \model:} We measure the runtime and number of collisions with respect to the number of agents. Auction-based planners take up $\frac{1}{4}\times$ the time for $10\times$ the number of agents compared to MAPF solvers in \model settings.}
  \label{fig: MD_socialMAPF}
     \vspace{-7pt}
  \end{figure*}  
  
%   \begin{figure*}[t]
% \begin{subfigure}[h]{0.245\textwidth}
%     \includegraphics[width=\textwidth]{images/welfare/socialWelfare_D.jpg}
%     \caption{}
%     \label{fig: social_welfare_doorway}
%   \end{subfigure}
%           %
% \begin{subfigure}[h]{0.245\textwidth}
%     \includegraphics[width=\textwidth]{images/welfare/socialWelfare_H.jpg}
%     \caption{}
%     \label{fig: social_welfare_hallway}
%   \end{subfigure}
%           %
% \begin{subfigure}[h]{0.245\textwidth}
%     \includegraphics[width=\textwidth]{images/welfare/socialWelfare_I.jpg}
%     \caption{}
%     \label{fig: social_welfare_intersection}
%   \end{subfigure}
%             %
% \begin{subfigure}[h]{0.245\textwidth}
%     \includegraphics[width=\textwidth]{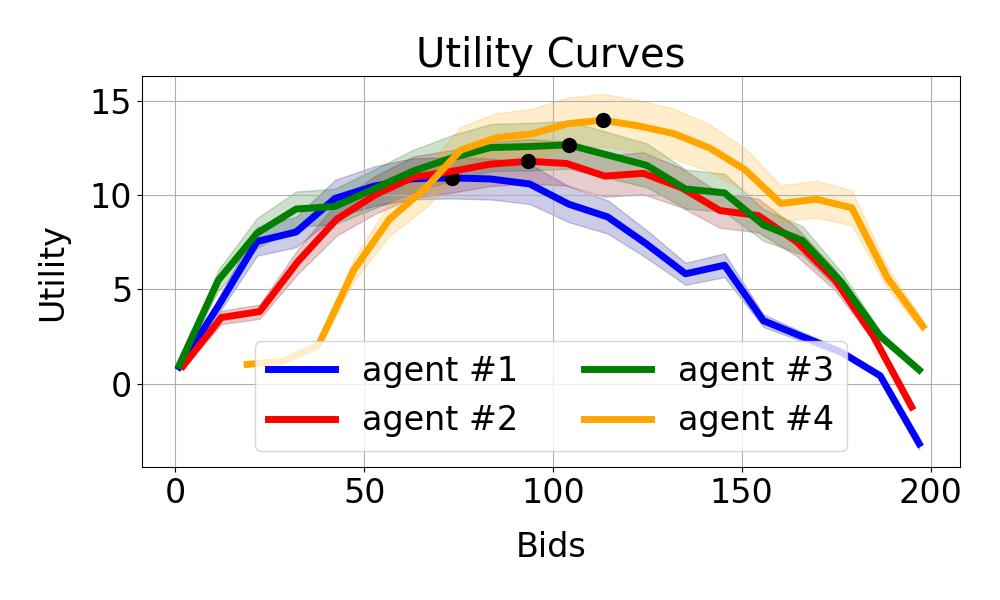}
%     \caption{}
%     \label{fig: utility}
%   \end{subfigure}
% \caption{\textbf{Auction-based planning is strategy-proof: }In Figures~\ref{fig: social_welfare_doorway},~\ref{fig: social_welfare_hallway}, and~\ref{fig: social_welfare_intersection}, we compare the social welfare of the system using an auction planner (blue line) with that computed using CBS-random (red line). In Figure~\ref{fig: utility}, we plot the utility curves for $4$ agents as a function of their bids.}
%   \label{fig: strategy_proof}
%   \vspace{-10pt}
% \end{figure*}  

\begin{equation}
        h_i(b) =  \sum_{j=q}^{\lvert \mathcal{C}^t_{\mathcal{O}} \rvert} b_{j+1} \left( \alpha_j - \alpha_{j+1} \right)
     \label{eq: h}
\end{equation}

\noindent The payment rule is the ``social cost'' of reaching the goal ahead of the agents arriving on turns $q+1, q+2, \ldots, q+\mathcal{C}^t_{\mathcal{O}}$. The overall utility function for $a_i$, $\Phi_i (b)$, is given by the following equation,

\begin{equation}
    % \begin{split}
        \Phi_i (b) =  v_i \alpha_q - \sum_{j=q}^{\lvert \mathcal{C}^t_{\mathcal{O}} \rvert} b_{j+1} \left( \alpha_j - \alpha_{j+1} \right)
    % \end{split}
     \label{eq: utility_social}
\end{equation}

\noindent where $b = [b_1, b_2, \ldots, b_{\mathcal{C}^t_{\mathcal{O}}}]^\top$ represent the agent bids. The gain term $v_i \alpha_q$ denotes $a_i$'s time reward on reaching the goal on the $q^\textrm{th}$ turn. Note that the gain term in Equation~\ref{eq: utility_social} $(v_i \alpha_q)$ implicitly encourages agents to adopt shorter paths (greater time rewards) but not necessarily the shortest path. Such a formulation recognizes that social navigation is not equivalent to traversing the shortest paths, as typically formulated in the classical MAPF.

To obtain the global objective function for a sequence $\Gamma$, we first reformulate Equation~\ref{eq: SoC} as

\begin{equation}
    \mathcal{\widehat F}(\Gamma) = \sum_{i=1}^{k} v_i t_{g,i}
    \label{eq: socialmapf_objective}
\end{equation}

\noindent Minimizing Equation~\ref{eq: socialmapf_objective} is the same as minimizing the SoC since the vector $[v_1, v_2, \ldots, v_k]^\top$ is constant. Equivalently, using the A.M.-G.M.-H.M. inequality, we can choose to maximize $\sum_{i=1}^{k} {v_i}(t_{g,i})^{-1}$. Using $\alpha_i = (t_{g,i})^{-1}$, we have,

% \begin{equation}
% \mathcal{\widehat F}(\Gamma) = \sum_{i=1}^{k} v_i \alpha_i    \label{eq: socialmapf_objective_welfare}
% \end{equation}
\begin{equation}
    \begin{split}
    \Gamma^* = \arg_{\Gamma}\min \mathcal{\widehat F}(\Gamma)
    &= \arg_{\Gamma}\max \sum_{i=1}^{k} v_i (t_{g,i})^{-1}\\
    &= \arg_{\Gamma}\max \sum_{i=1}^{k} v_i \alpha_i  \\
\end{split}
\label{eq: socialmapf_objective_welfare}
\end{equation}
\noindent which represents the social welfare, $\mathcal{W}$, of an auction with $k$ agents that receives rewards $\alpha_i$. 

We have, at this point, derived the existence of $\sigma_\model \vdash \Gamma$ with Equations~\ref{eq: utility_social} and \ref{eq: socialmapf_objective_welfare} corresponding to the agent utility function and social welfare of a sequence $\Gamma$. We now prove $\Gamma$ is optimal. 

\begin{theorem}
\textbf{$\sigma_\model \vdash \Gamma_\textsc{opt}$:} If $\sigma_\model \gets (g,h)$ where $(g,h)$ are defined by Equations~\ref{eq: g} and \ref{eq: h}, then $\sigma_\model \vdash \Gamma_\textsc{opt}$.
\label{thm: optimality}
\end{theorem}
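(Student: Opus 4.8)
The plan is to reduce the claim to the classical characterization of the Vickrey–Clarke–Groves (VCG) mechanism. The allocation rule $g$ defined in Equation~\ref{eq: g} assigns turn $q=1$ to the highest bidder, $q=2$ to the next, and so on, which is exactly the allocation that maximizes $\sum_i b_i \alpha_{\sigma_i}$ over all permutations $\sigma$ of $\mathcal{C}^t_{\mathcal{O}}$ (this follows from the rearrangement inequality, since $\alpha_1 \geq \alpha_2 \geq \cdots$ because earlier turns yield larger time rewards $\alpha_q = (t_{g,q})^{-1}$). The payment rule $h_i$ in Equation~\ref{eq: h} is precisely the Clarke pivot / externality payment: $h_i(b)$ equals the difference between the optimal social welfare of the other agents when $a_i$ is absent and their welfare in the chosen allocation, telescoped over the turns $q, q+1, \dots$. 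So the first step is to verify that $(g,h)$ is an instance of VCG for the single-parameter allocation problem over turn-orderings.

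Second, I would invoke the standard result (Chapter 3 of~\cite{roughgarden2016twenty}, which the paper already cites for Equation~\ref{eq: utility_template}) that every VCG mechanism is dominant-strategy incentive-compatible (strategy-proof) and welfare-maximizing. Strategy-proofness gives $b_i^* = v_i$ for every agent, as stated in the background subsection. Substituting $b_i^* = v_i$ into the utility $\Phi_i(b)$ of Equation~\ref{eq: utility_social} shows each agent's utility is simultaneously maximized at the truthful profile — this handles the ``local agent utilities are maximal'' half of Definition~\ref{def: optimal_gamma}. One technical point to check here: the quasi-linear utility in Equation~\ref{eq: utility_social} uses $v_i\alpha_q$ as the gain term while the bid appears only in the payment; I need to confirm the usual single-parameter VCG argument (Myerson-style, or the direct pivot argument) applies verbatim so that truthful bidding is a dominant strategy.

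Third, for the ``global SoC is minimal'' half, I would chain together Equation~\ref{eq: socialmapf_objective_welfare}: minimizing $\widehat{\mathcal F}(\Gamma) = \sum_i v_i t_{g,i}$ (which is equivalent to minimizing the original SoC of Equation~\ref{eq: SoC} since the incentive vector is a fixed positive scaling, via the AM–GM–HM inequality) is the same as maximizing the social welfare $\mathcal W(b) = \sum_i v_i \alpha_i$. Since VCG with truthful bids maximizes $\mathcal W$ at each conflict resolution, the induced ordering $\sigma_\model$ yields a $\Gamma$ that is globally welfare-optimal, hence SoC-optimal. Combining the two halves, the $\Gamma$ obtained from $\sigma_\model$ meets both conditions of Definition~\ref{def: optimal_gamma}, so $\sigma_\model \vdash \Gamma_\textsc{opt}$.

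The main obstacle I anticipate is bridging the per-conflict (local, one-shot auction) optimality to the claim about the entire trajectory $\Gamma$: the theorem statement is about a full sequence of configurations, but the mechanism is run independently at each conflict with a one-step look-ahead motion planner. I would need to argue — or at least make precise the assumption — that greedy per-conflict welfare maximization composes to global welfare/SoC optimality of $\Gamma$. In general this requires that the $\alpha_i$ (time-to-goal rewards) used as auction valuations faithfully reflect the downstream cost, and that conflicts decompose independently; I expect the paper either restricts to the two-agent single-conflict setting of Phase 2 or quietly assumes this separability, and I would state that assumption explicitly rather than try to prove a stronger global composition result.
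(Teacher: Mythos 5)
Your proposal follows the same overall reduction as the paper: both arguments appeal to Definition~\ref{def: optimal_gamma}, use Equations~\ref{eq: utility_social} and~\ref{eq: socialmapf_objective_welfare} to identify the local and global objectives with the utility and welfare of an auction, and then conclude that it suffices to show $(g,h)$ is strategy-proof and welfare-maximizing. Where you diverge is in how that last step is discharged: the paper simply cites prior analyses in~\cite{chandra2022gameplan, suriyarachchi2022gameopt} for the claim that $(g,h)$ is a strategy-proof, welfare-maximizing auction with dominant strategy $b_i^* = v_i$, whereas you supply the argument yourself by recognizing $(g,h)$ as a VCG/Clarke-pivot mechanism for a single-parameter environment, using the rearrangement inequality to show the sorted allocation maximizes $\sum_i b_i \alpha_{\sigma_i}$, and then invoking the standard VCG incentive-compatibility result from~\cite{roughgarden2016twenty}. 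That makes your version more self-contained, and the VCG identification is the right one for this payment rule, so this is a legitimate (and arguably preferable) substitute for the paper's citation. You also explicitly flag the step the paper's proof silently elides: the theorem is a statement about the entire sequence $\Gamma$, but the auction is run per conflict with a one-step look-ahead planner, so per-conflict welfare maximization must be argued (or assumed) to compose into optimality of $\Gamma$ as a whole. The paper offers no such argument, so your instinct to state this as an explicit separability assumption is sound; just be aware that in doing so you are being more careful than the reference proof, not deviating from it.
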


\begin{proof}
From Definition~\ref{def: optimal_gamma}, $\Gamma$ is optimal when the global sum-of-cost (Equation~\ref{eq: SoC}) is minimal and local agent utilities (Equation~\ref{eq: agent_utility}) are maximal. Equations~\ref{eq: utility_social} and \ref{eq: socialmapf_objective_welfare} demonstrate that Equations~\ref{eq: SoC} and \ref{eq: agent_utility} correspond to the utility and welfare functions of an arbitrary auction. Therefore, to prove that $\sigma_\model \vdash \Gamma_\textsc{opt}$, it is sufficient to show that $(g,h)$ is welfare-maximizing and strategy-proof. Theoretical analysis of the auction $(g,h)$ in~\cite{chandra2022gameplan, suriyarachchi2022gameopt} showed that the auction specified by $(g,h)$ in Equation~\ref{eq: utility_social} is a strategy-proof, welfare-maximizing auction with the following strictly dominant strategy: $b_i^* = v_i$.
\end{proof}
% \begin{algorithm}[h]
%     \SetKwInOut{Input}{Input}
%     \SetKwInOut{Output}{Output}
% \SetKwComment{Comment}{$\triangleright$\ }{}
% \SetAlgoLined
% \Input{$\mathcal{M}, S_\textrm{current}, h, g $}
% \For {$a_i, i=1, 2,\ldots, k$}{
% \If{$! \langle a_i, a_j, s^{t+1}_i, t \rangle$}{
% $a_i \gets s^{t+1}_i$\\
% \Else{
% $u^{t+1}_i \gets auction(a_i, a_{-i})$\\
% }
% }
% }
% \caption{Phase $2:$ Conflict Resolution}
% \label{alg: ex}
% % \vspace{-10pt}
% \end{algorithm}
% \subsubsection{Optimality of \model solvers}

% \begin{equation}
%     b_i^* = v_i
%     \label{eq: optimal_bid}
% \end{equation}

% \noindent More specifically, we can show that the auction represented by Equation~\ref{eq: utility_social} maximizes social welfare while simultaneously ensuring that agents maximize their individual utility by bidding $b_i^* = v_i$. We refer the interested reader to~\cite{chandra2022gameplan} for detailed proofs. One caveat to note here is that the auction represented by Equation~\ref{eq: utility_social} is feasible only in simulation where we can represent the payment term via artificial money. In real world robots, the payment term is vacuous in the sense that robots do not incur any tangible ``time-based'' penalty due to the externality caused by other agents. Exploring feasible and optimal auctions for the real world case is a direction for future work.

\section{Experiments and Discussion}
\label{sec: experiments}

Our experiments answer the following questions: $(i)$ how does mechanism design-based planning perform in \model? $(ii)$ is mechanism design-based planning strategy-proof? $(iii)$, can this new class of mechanism design-based solvers extend to increasingly complex \model environments, for \textit{e.g.} in the presence of obstacles? and  $(iv)$ can current optimal search-based MAPF algorithms solve \model?
% and $(v)$ do related tasks like motion planning also scale in complexity with strategic agents in constrained social environments ?

% \subsection{Implementation Details}
We used Python to perform all experiments on an Intel(R) i$7$ CPU at $2.20$GHz with $8$ cores and $16$GB RAM. For experiments that involve comparing runtime, we set a maximum timeout for each trial as $20$ seconds.
\begin{figure*}[t]
\begin{subfigure}[h]{0.245\textwidth}
    \includegraphics[width=\textwidth]{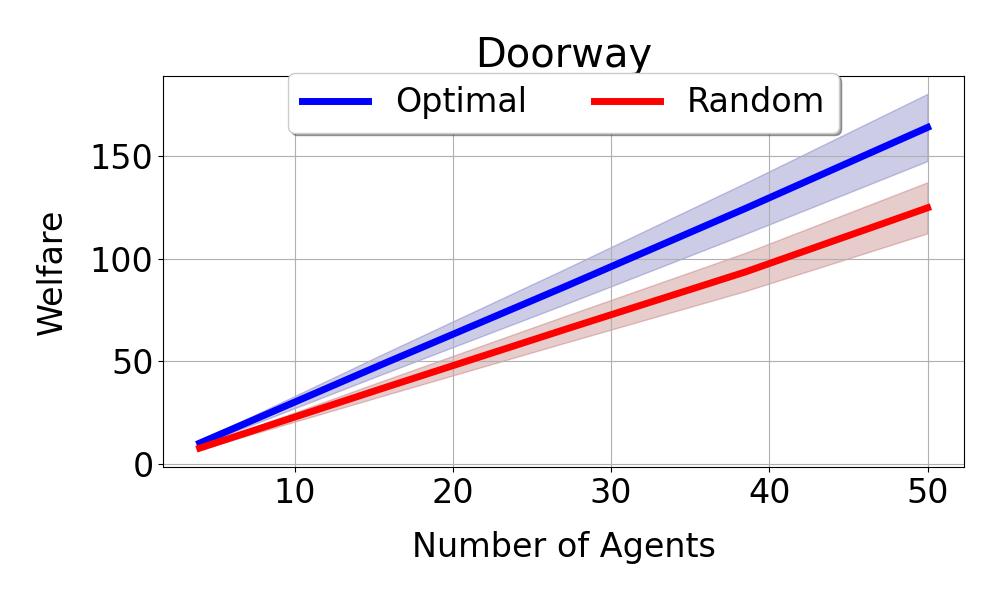}
    \caption{}
    \label{fig: social_welfare_doorway}
  \end{subfigure}
\begin{subfigure}[h]{0.245\textwidth}
    \includegraphics[width=\textwidth]{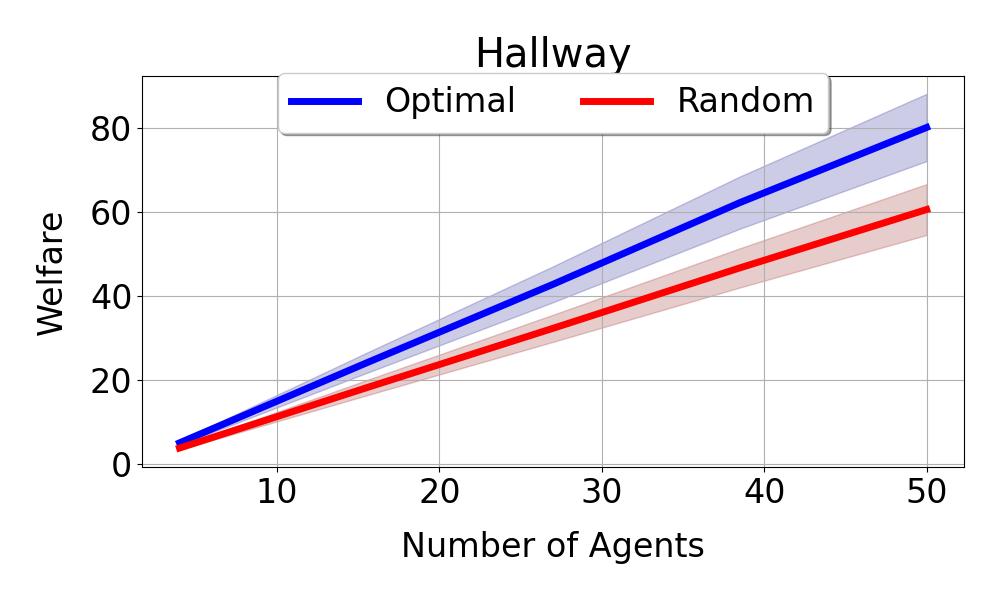}
    \caption{}
    \label{fig: social_welfare_hallway}
  \end{subfigure}
\begin{subfigure}[h]{0.245\textwidth}
    \includegraphics[width=\textwidth]{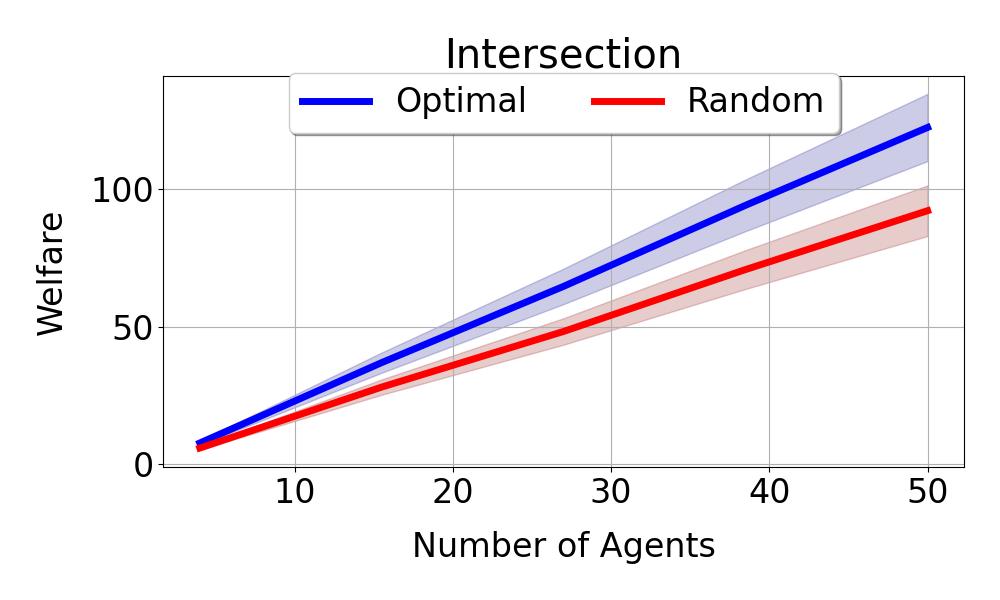}
    \caption{}
    \label{fig: social_welfare_intersection}
  \end{subfigure}
\begin{subfigure}[h]{0.245\textwidth}
    \includegraphics[width=\textwidth]{images/utility/utility.jpg}
    \caption{}
    \label{fig: utility}
  \end{subfigure}
\caption{\textbf{Auction-based planning is strategy-proof: }In Figures~\ref{fig: social_welfare_doorway},~\ref{fig: social_welfare_hallway}, and~\ref{fig: social_welfare_intersection}, we compare the social welfare of the system using an auction planner (blue line) with that computed using CBS-random (red line). In Figure~\ref{fig: utility}, we plot the utility curves for $4$ agents as a function of their bids. The black points denote an agent's private incentive $v_i$.}
  \label{fig: strategy_proof}
%   \vspace{-10pt}
\end{figure*}

\begin{figure*}[t]
\centering
  \begin{subfigure}[h]{0.24\textwidth}
    \includegraphics[width=\textwidth]{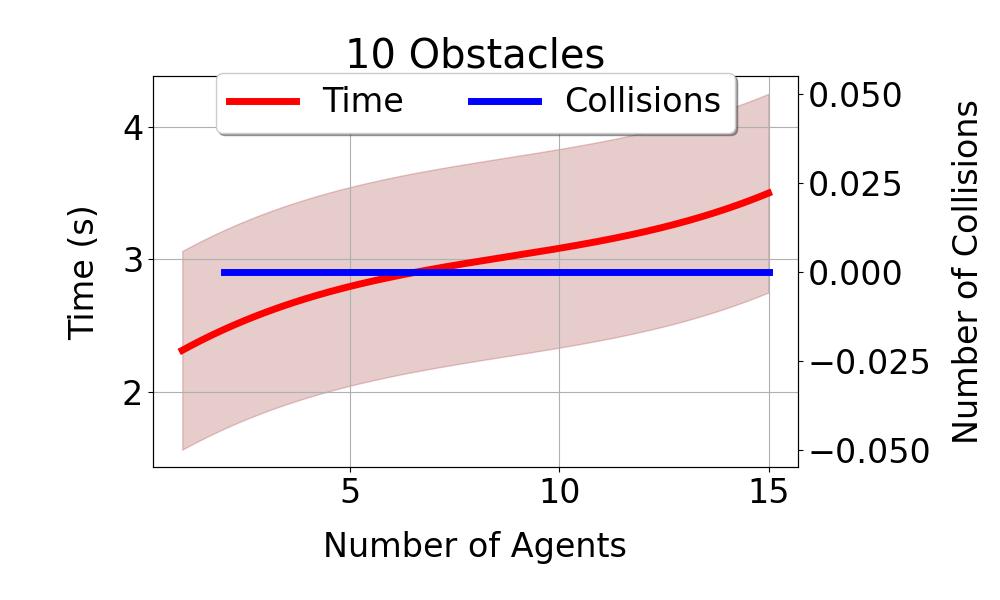}
    \caption{}
    \label{fig: obs1}
  \end{subfigure}
  \begin{subfigure}[h]{0.24\textwidth}
    \includegraphics[width=\textwidth]{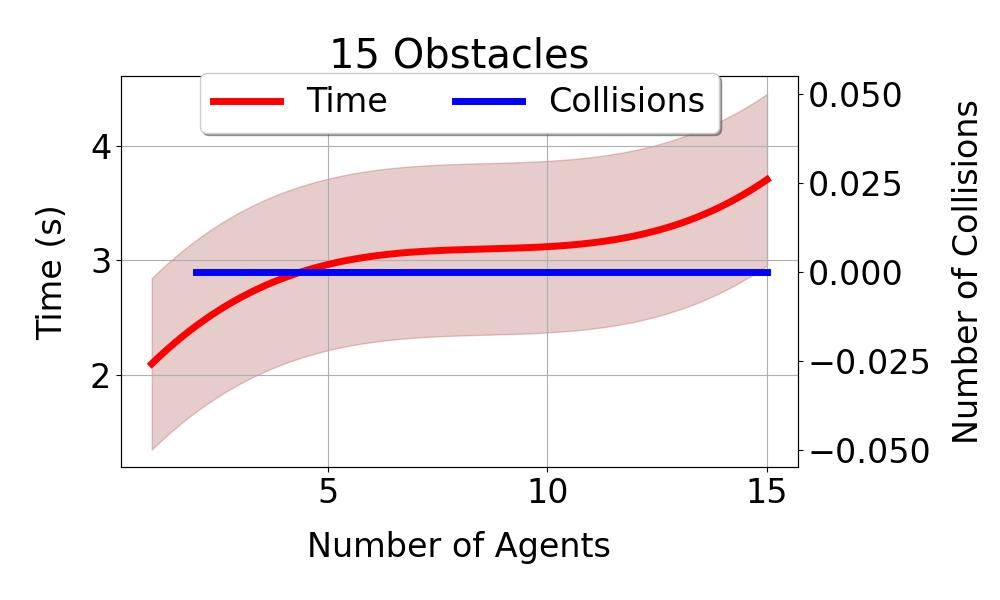}
    \caption{}
    \label{fig: obs2}
  \end{subfigure}
  \begin{subfigure}[h]{0.24\textwidth}
    \includegraphics[width=\textwidth]{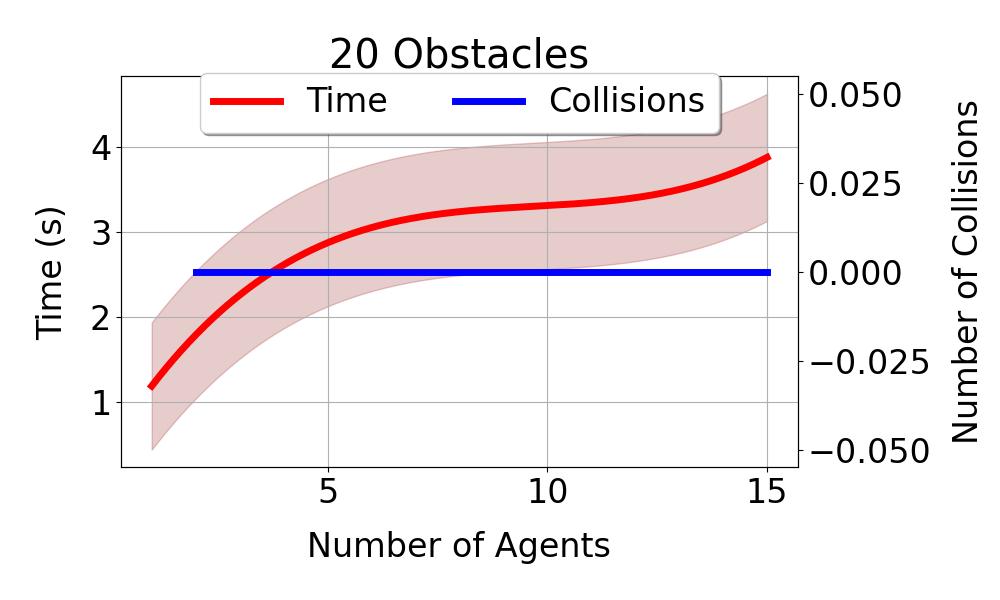}
    \caption{}
    \label{fig: obs3}
  \end{subfigure}
  \begin{subfigure}[h]{0.24\textwidth}
    \includegraphics[width=\textwidth]{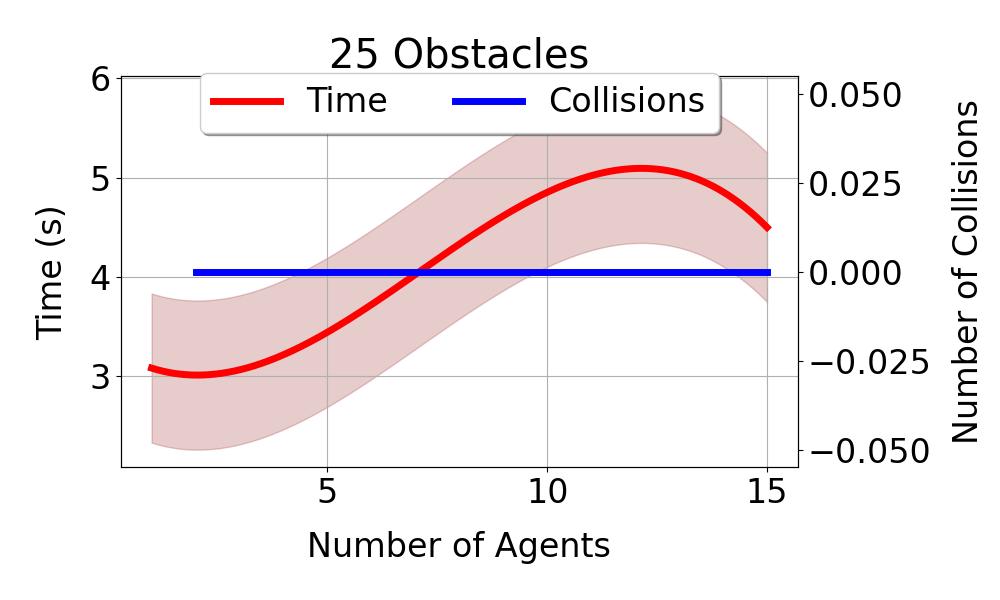}
    \caption{}
    \label{fig: obs4}
  \end{subfigure}

\caption{\textbf{\model with obstacles:} Measuring the time to goal and number of collisions with respect to number of obstacles randomly placed in a fixed grid size.}
  \label{fig: obstacles}
%   \vspace{-10pt}
\end{figure*}   

\begin{figure*}[t]
\centering
   \begin{subfigure}[h]{0.19\textwidth}
    \includegraphics[width=\textwidth]{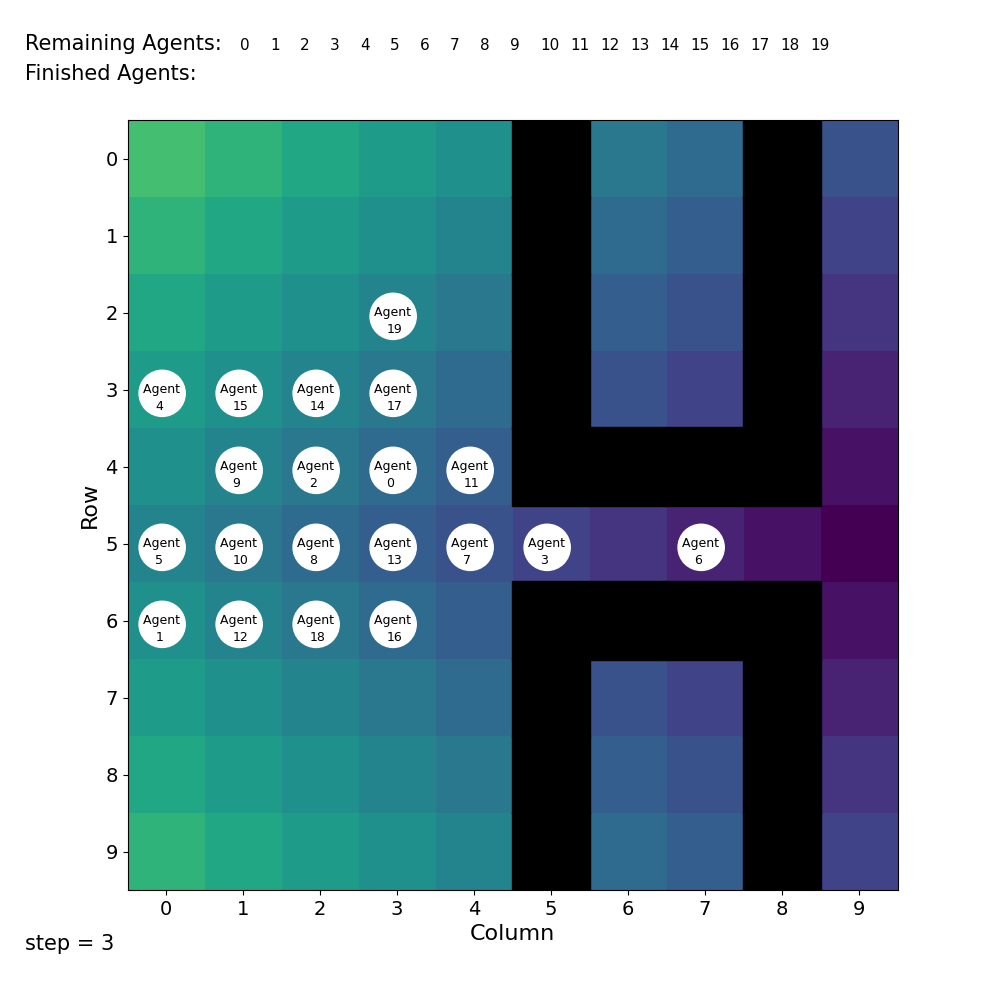}
    \caption{}
    \label{fig: doorway_agents}
  \end{subfigure}
 \begin{subfigure}[h]{0.19\textwidth}
    \includegraphics[width=\textwidth]{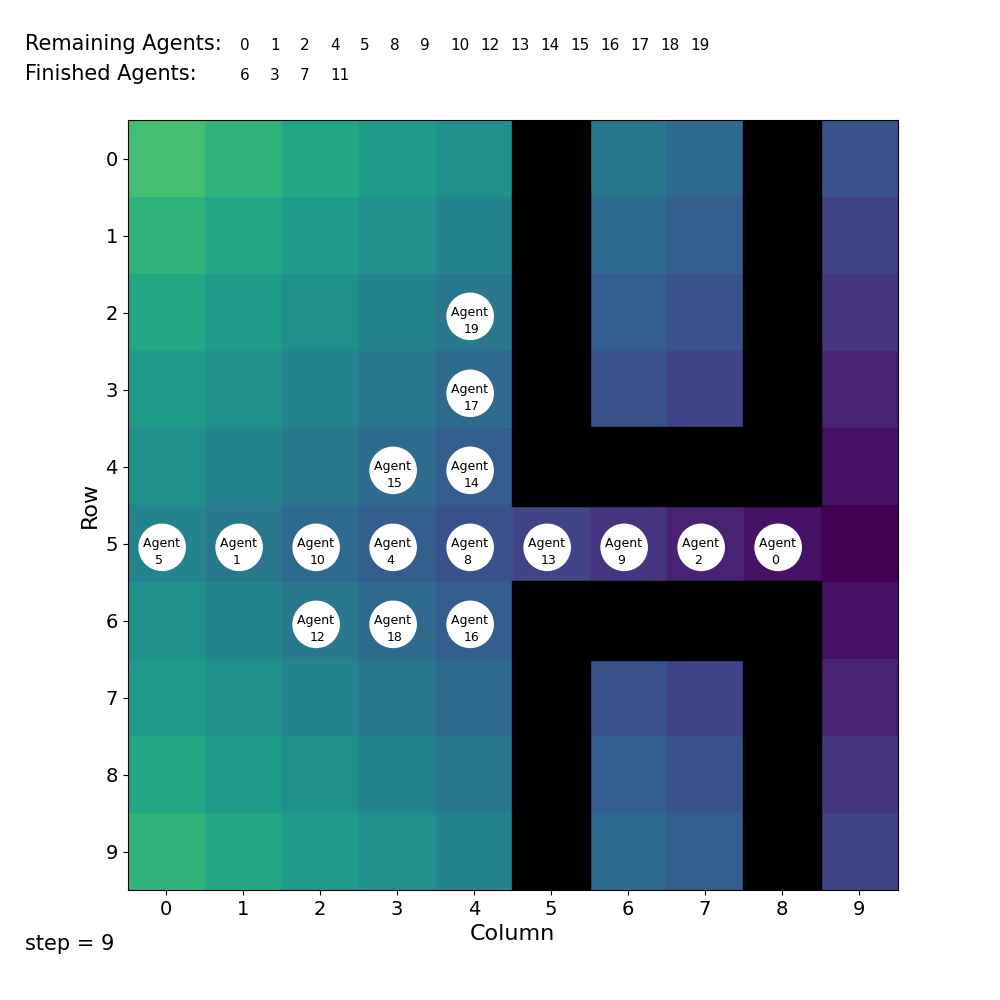}
    \caption{}
    \label{fig: hallway_agents}
  \end{subfigure}
\begin{subfigure}[h]{0.19\textwidth}
    \includegraphics[width=\textwidth]{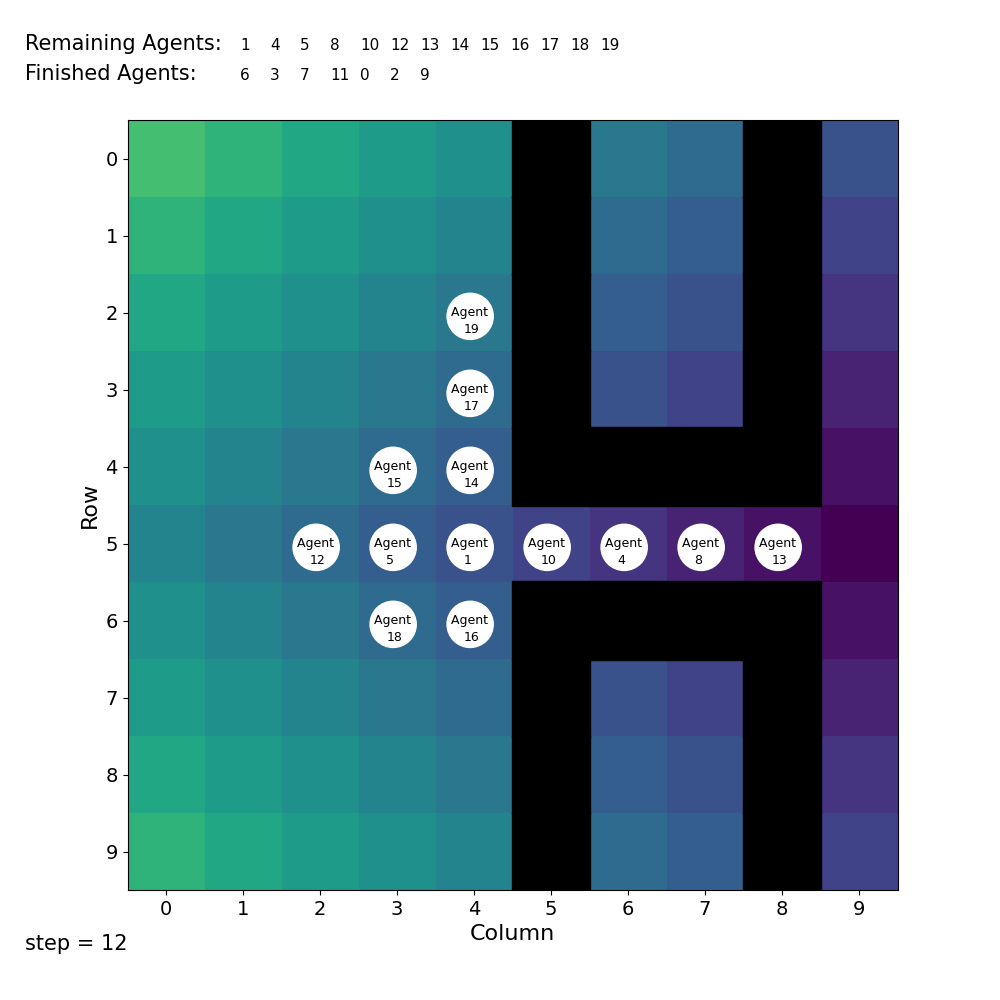}
    \caption{}
    \label{fig: intersection_agents}
  \end{subfigure}
\begin{subfigure}[h]{0.19\textwidth}
    \includegraphics[width=\textwidth]{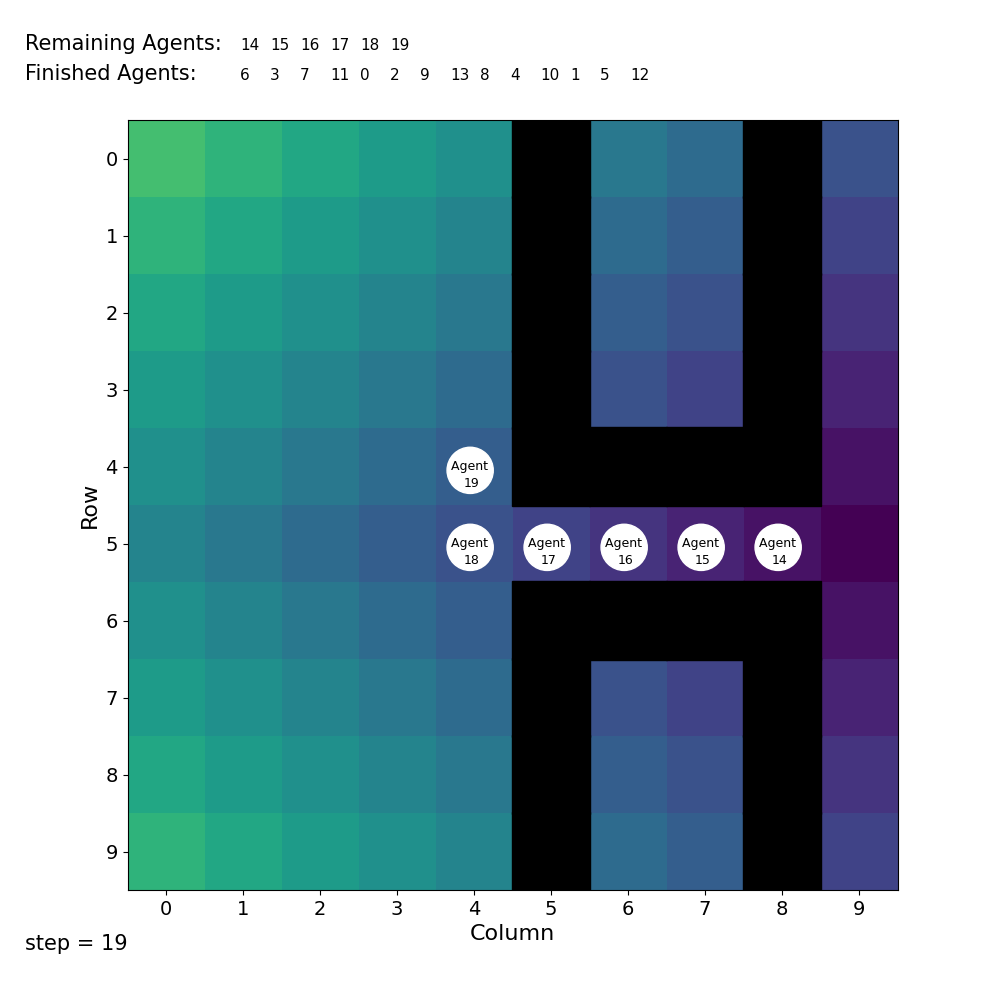}
    \caption{}
    \label{fig: doorway_gapsize}
  \end{subfigure}
\begin{subfigure}[h]{0.19\textwidth}
    \includegraphics[width=\textwidth]{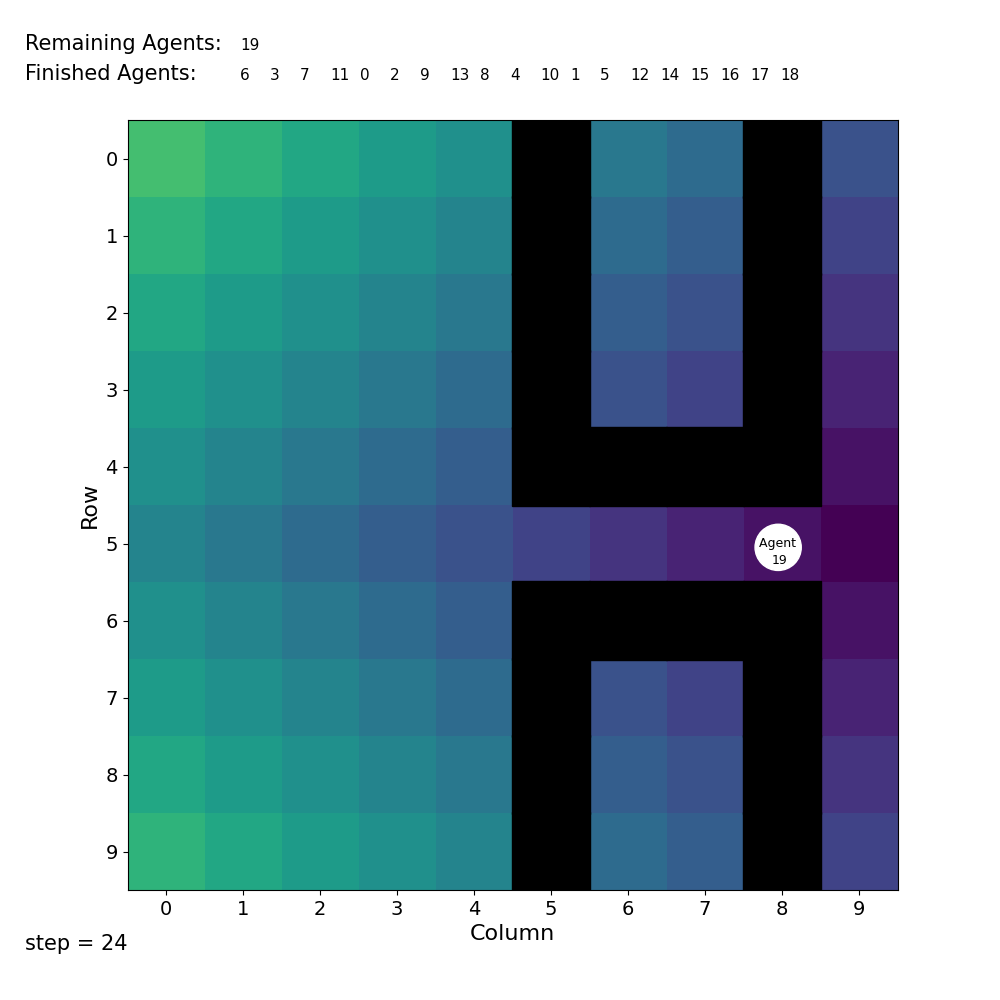}
    \caption{}
    \label{fig: hallway_gapsize}
  \end{subfigure}
   \begin{subfigure}[h]{0.19\textwidth}
    \includegraphics[width=\textwidth]{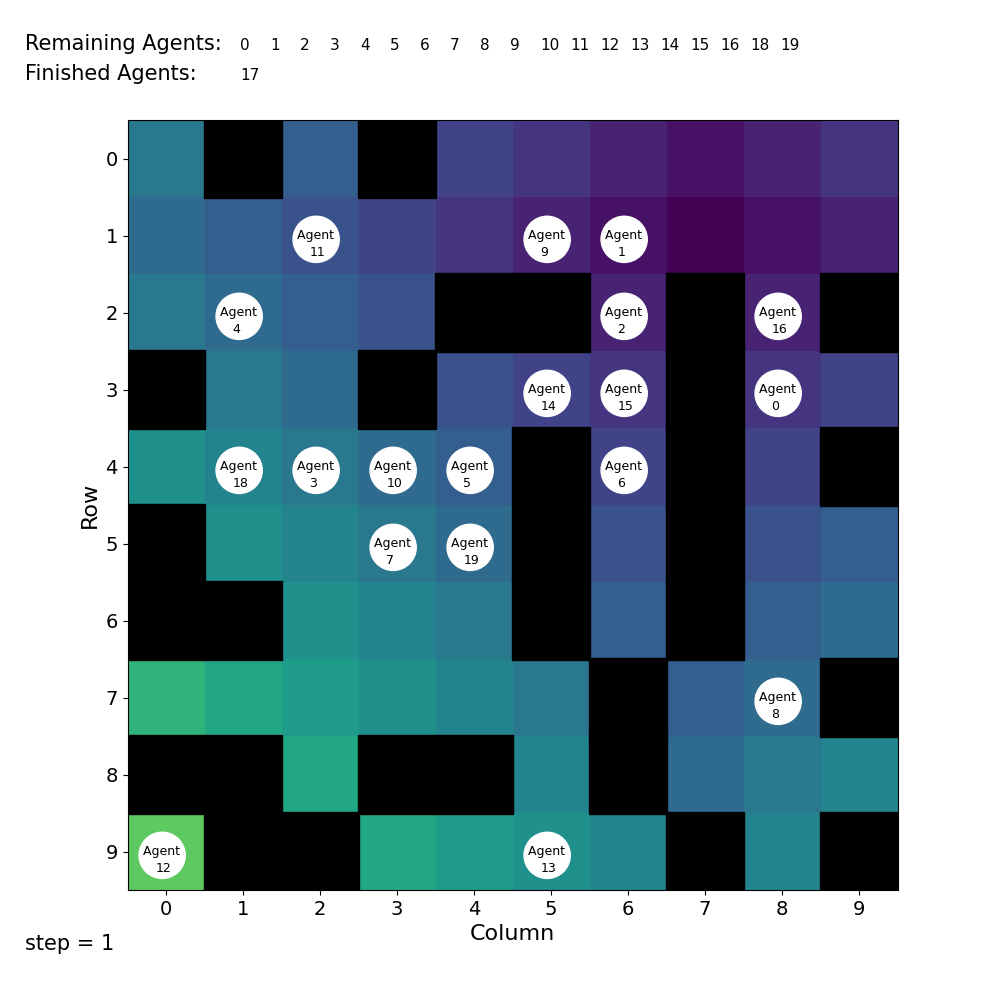}
    \caption{}
    \label{fig: obs_vis_1}
  \end{subfigure}
 \begin{subfigure}[h]{0.19\textwidth}
    \includegraphics[width=\textwidth]{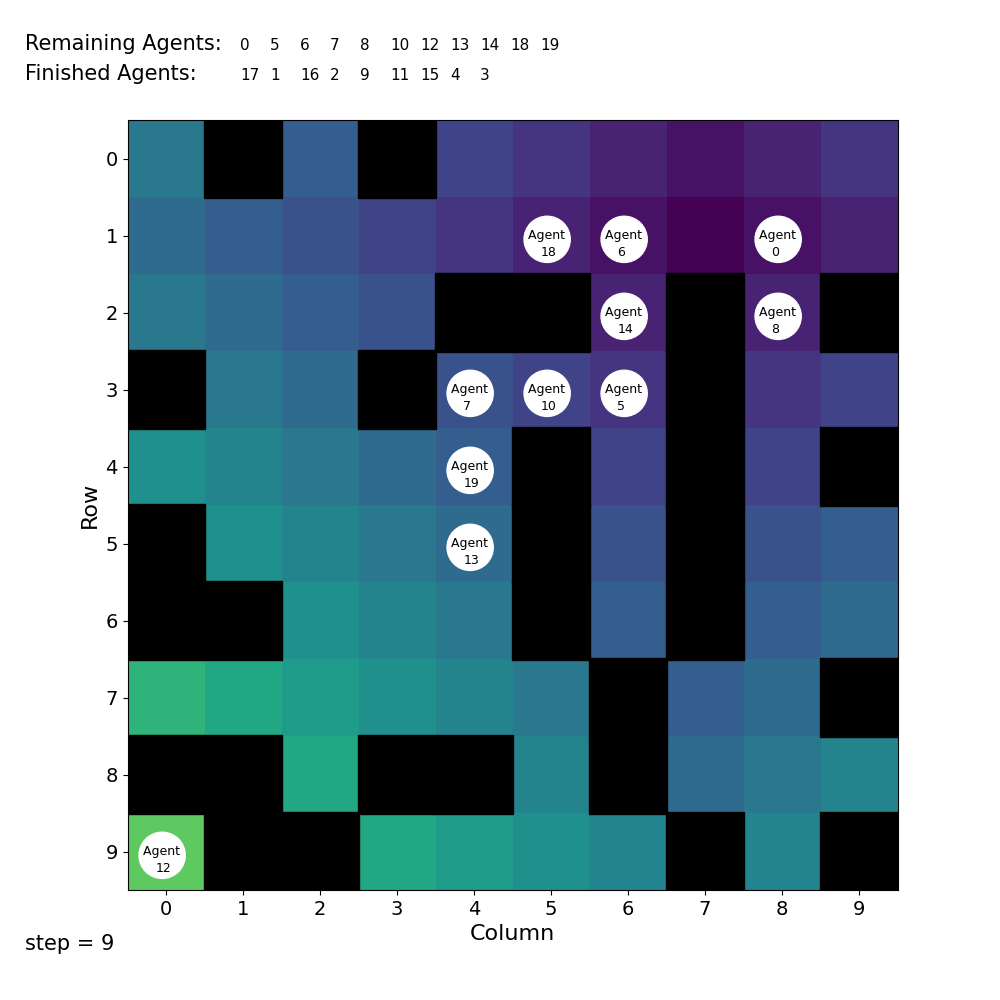}
    \caption{}
    \label{fig: obs_vis_2}
  \end{subfigure}
\begin{subfigure}[h]{0.19\textwidth}
    \includegraphics[width=\textwidth]{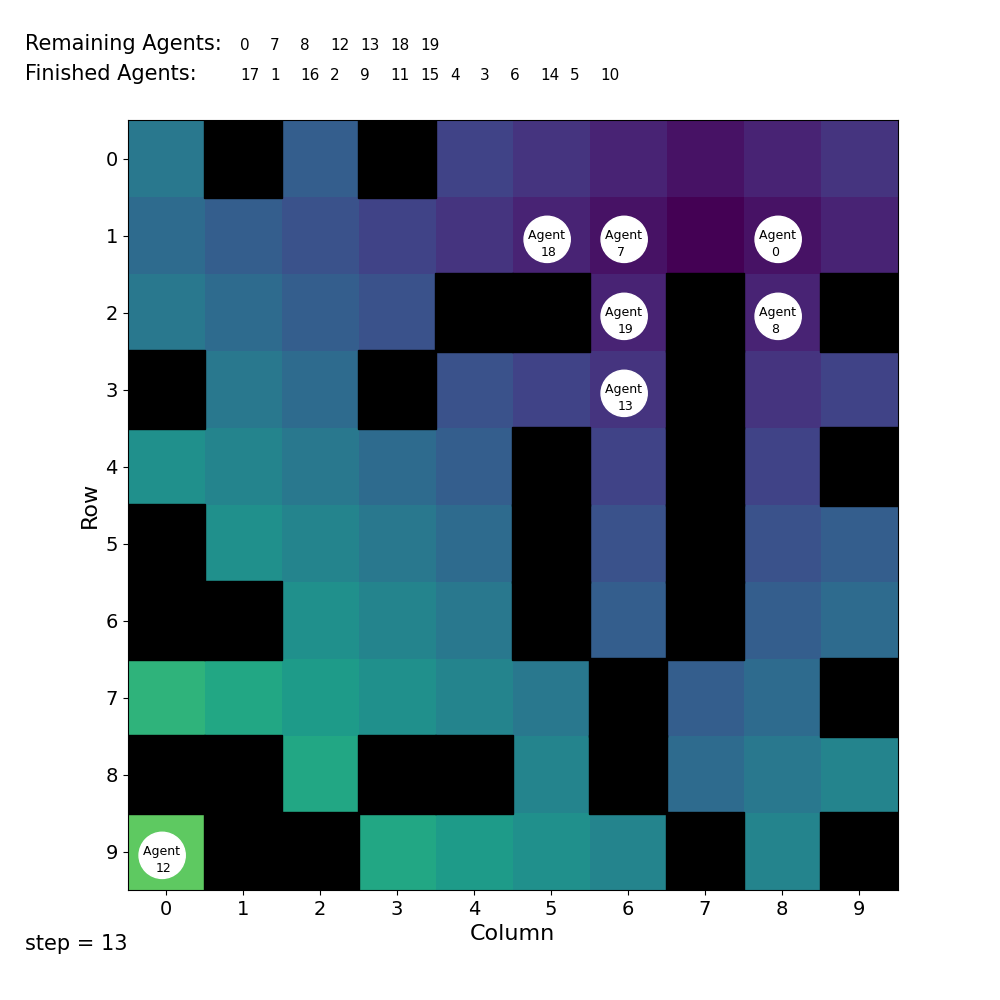}
    \caption{}
    \label{fig: obs_vis_3}
  \end{subfigure}
\begin{subfigure}[h]{0.19\textwidth}
    \includegraphics[width=\textwidth]{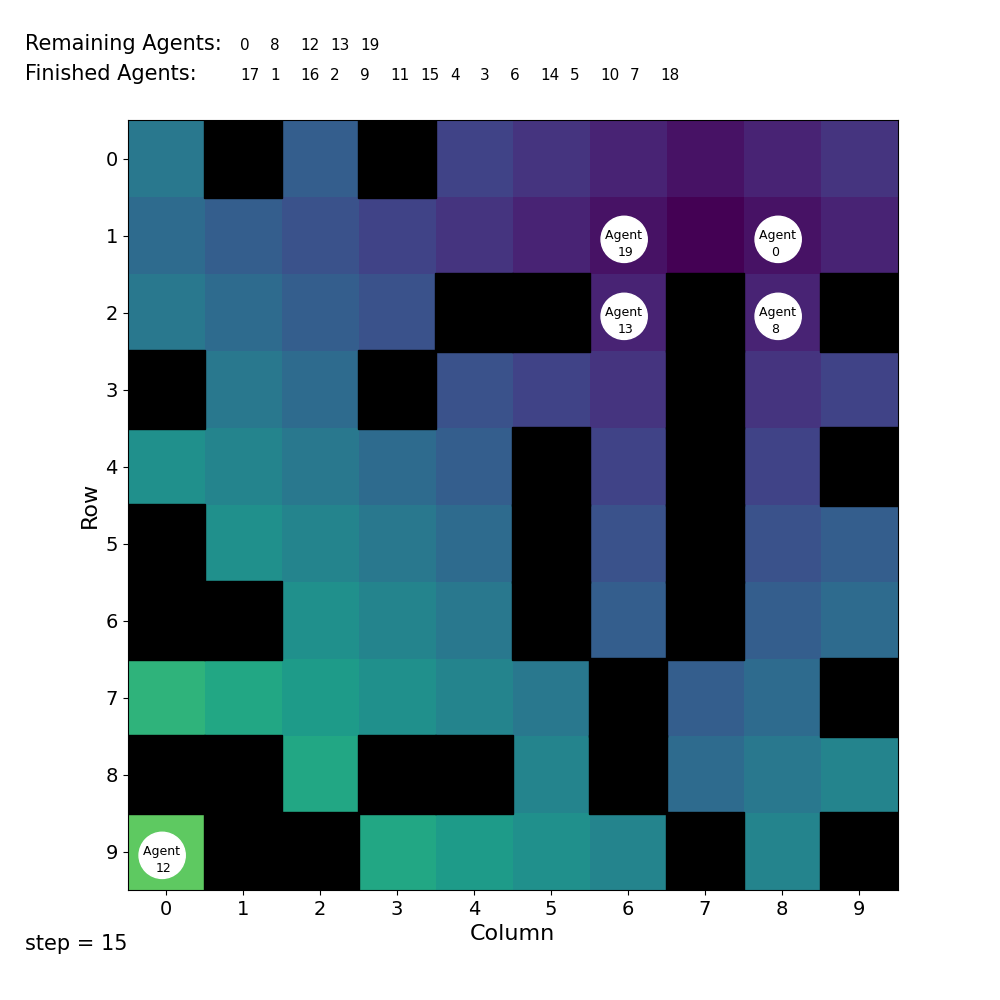}
    \caption{}
    \label{fig: obs_vis_4}
  \end{subfigure}
\begin{subfigure}[h]{0.19\textwidth}
    \includegraphics[width=\textwidth]{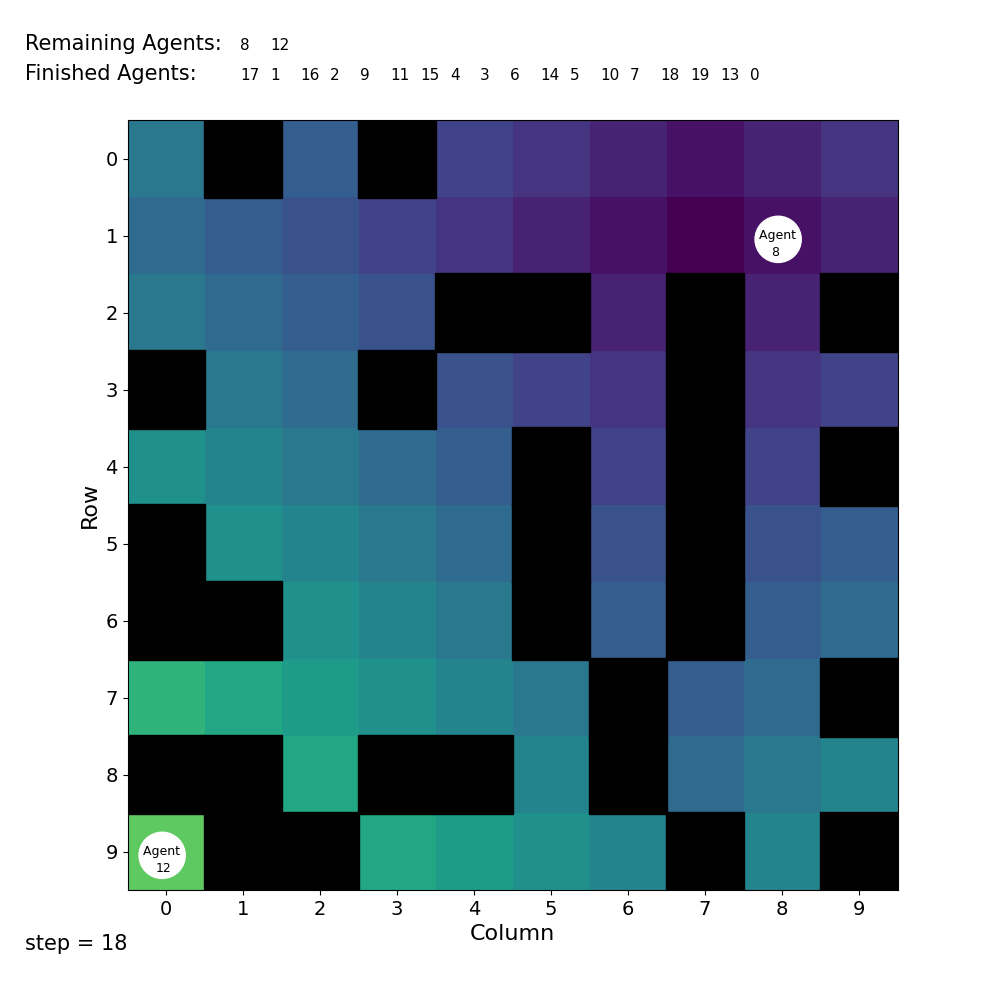}
    \caption{}
    \label{fig: obs_vis_5}
  \end{subfigure}
\caption{\textbf{Visualizations:} \textit{(top)} auction-based planning in the narrow hallway scenario for many strategic agents and \textit{(bottom)} auction-based planning in the presence of static obstacles.}
  \label{fig: simulation}
  \vspace{-10pt}
\end{figure*}

\subsection{Mechanism Design-based Planning in \model}

In contrast, we test our new auction-based \model solver in the same doorway, hallway, and intersection environments and compare the time taken to reach the goal as well as the number of collisions. Figures~\ref{fig: doorway_socialmapf}-\ref{fig: intersection_socialmapf} clearly show that our proposed approach yields $0$ collisions (whereas CBS and CBS-random yield up to $50$ and $20$ collisions, respectively, in the intersection scenario) and can scale up to $\mathcal{O}(n^2)$ agents where $n$ refers to the size of the grid. We visualize our approach in Figure~\ref{fig: simulation}. We vary the number of agents from $4$ to $50$ and average the results over $100$ trials. The solid color lines represent the mean value with the shaded regions representing the confidence intervals. We note that while the runtime is still excessive compared to modern MAPF approaches~\cite{li2021eecbs}, we emphasize that we have not implemented any performance-enhancing heuristic optimization, which is a direction of future work. The goal was to highlight the fact that in social navigation scenarios, where existing MAPF methods take over $60$ seconds to find a solution for $5$ or more agents, auction-based planners take up $\frac{1}{4}\times$ the time for $10\times$ the number of agents. 

\begin{figure*}[t]
\centering
\begin{tabular}{cccc}
    \toprule
    & Doorway & Hallway & Intersection\\
    \midrule
          \multirow{2}{*}{\rotatebox{90}{CBS}} &  \includegraphics[width=0.275\textwidth]{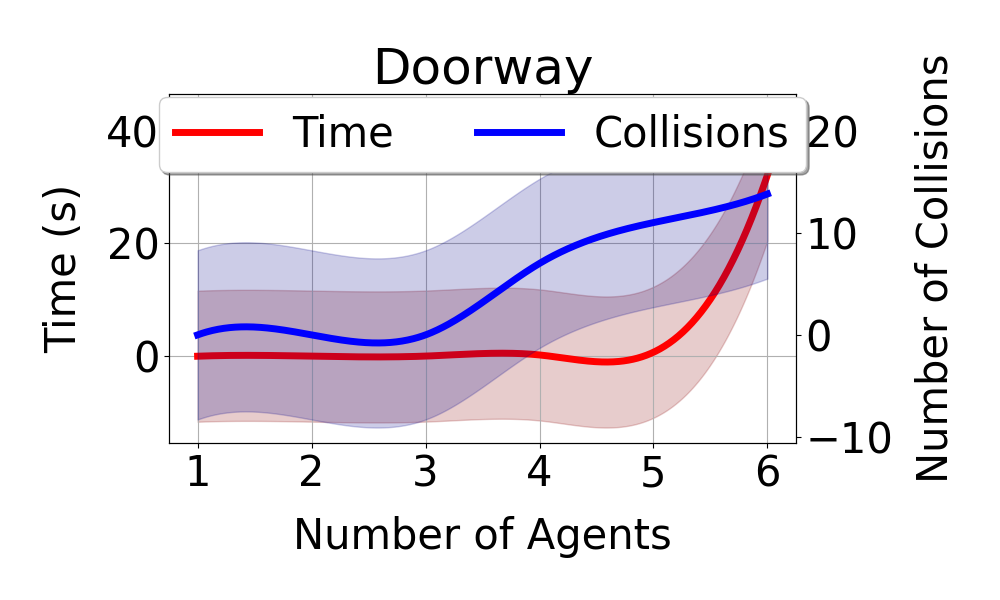} &     \includegraphics[width=0.275\textwidth]{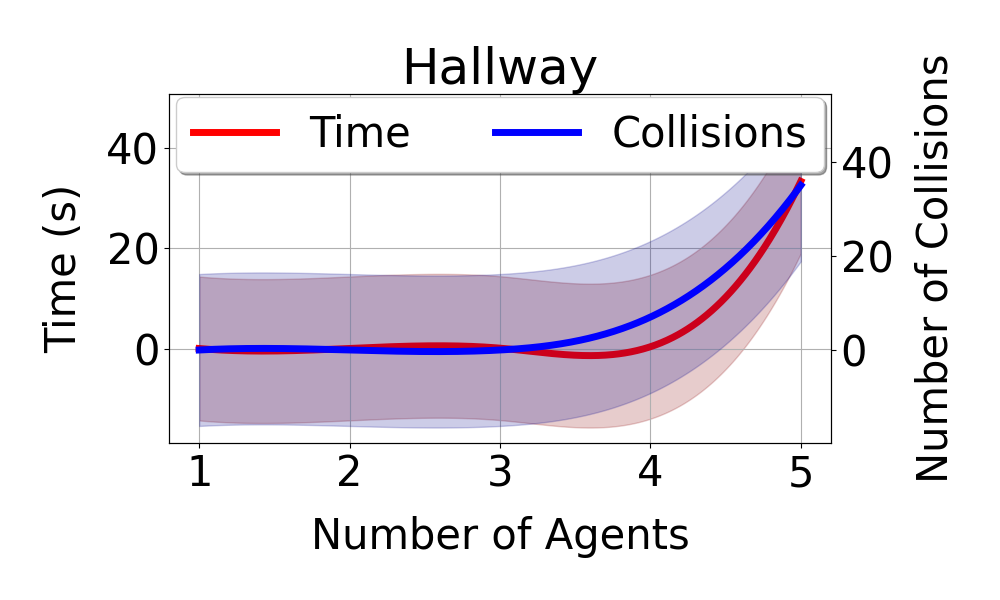}&     \includegraphics[width=0.275\textwidth]{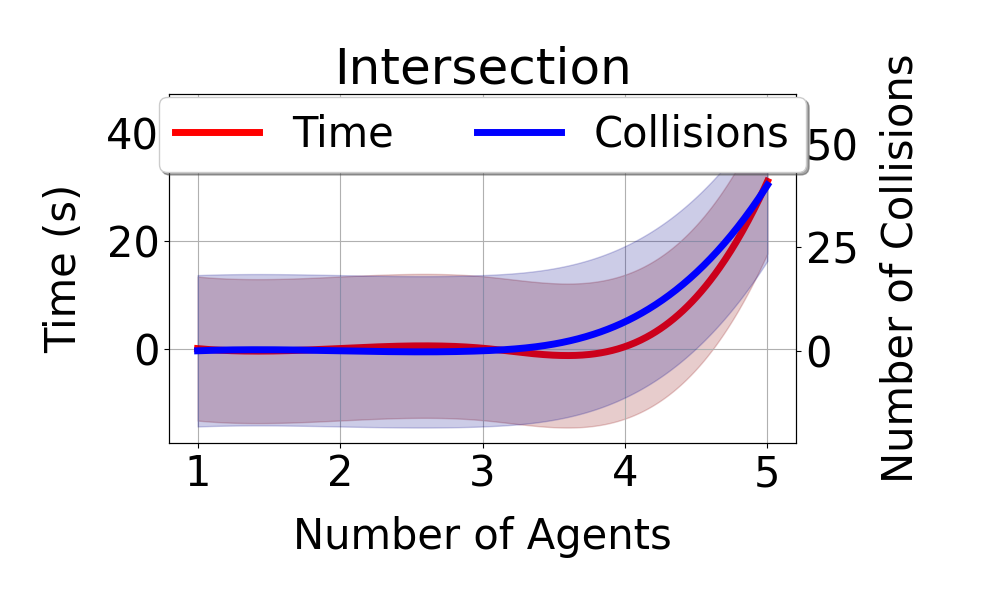} \\
     &  \includegraphics[width=0.275\textwidth]{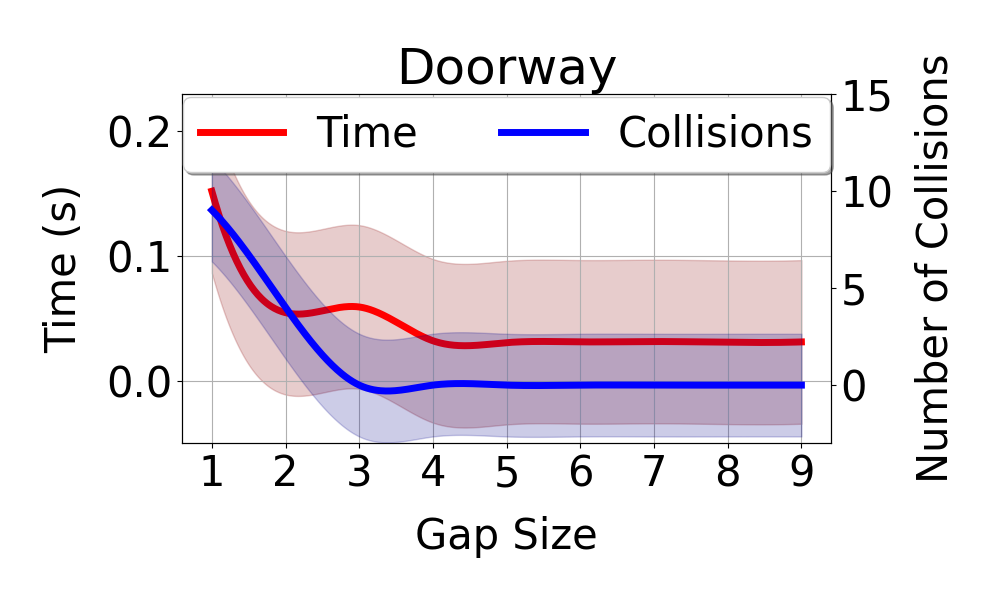} &     \includegraphics[width=0.275\textwidth]{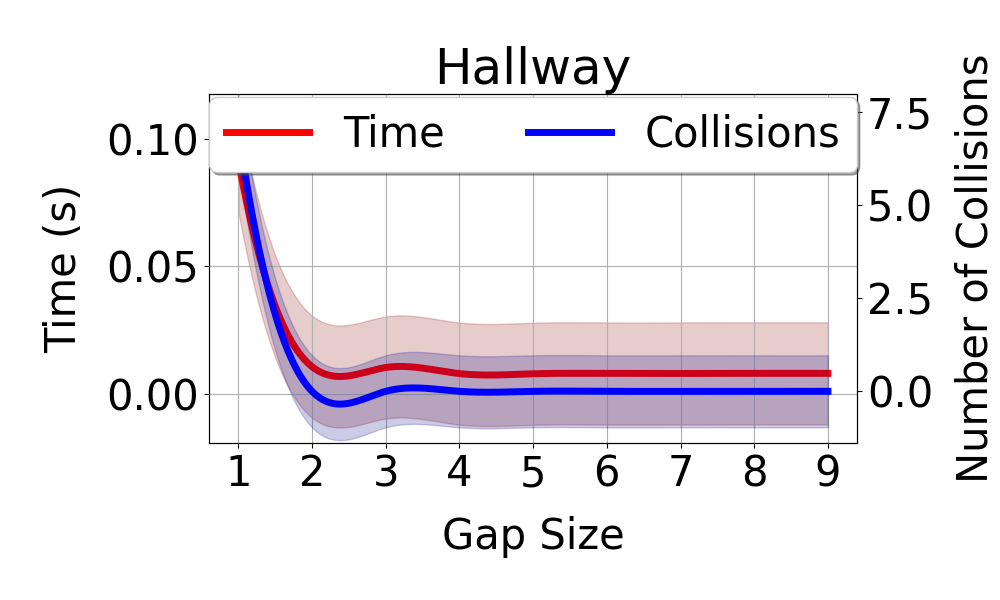}&     \includegraphics[width=0.275\textwidth]{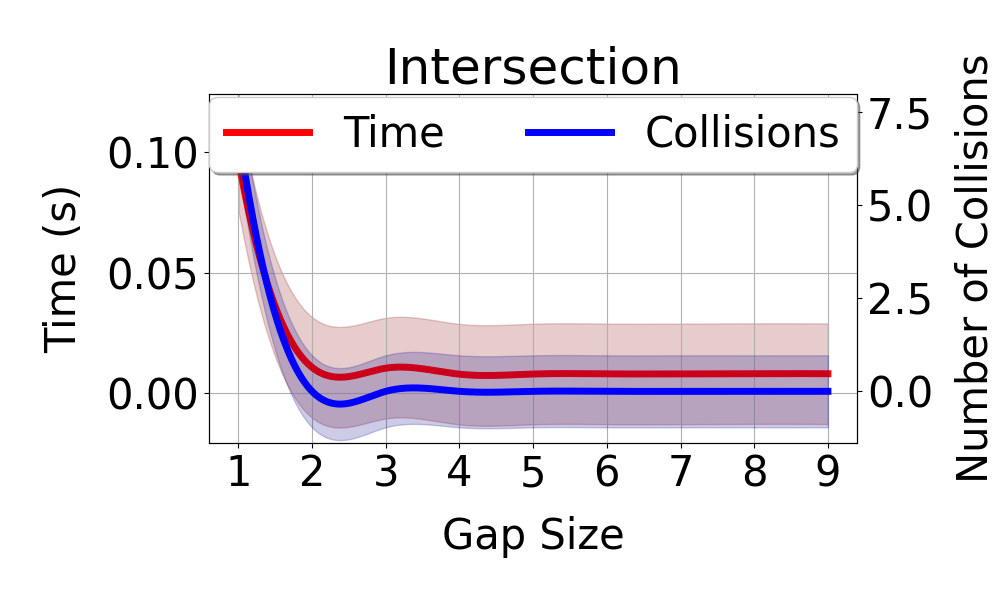} \\
      \cmidrule{2-4}
       \multirow{2}{*}{\rotatebox{90}{CBS-random}}&  \includegraphics[width=0.275\textwidth]{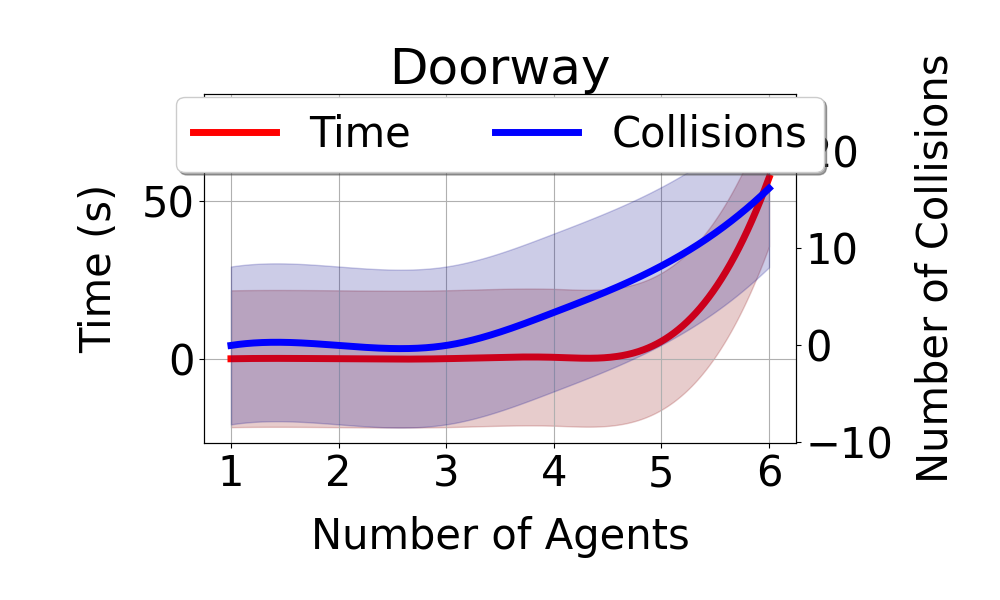} &     \includegraphics[width=0.275\textwidth]{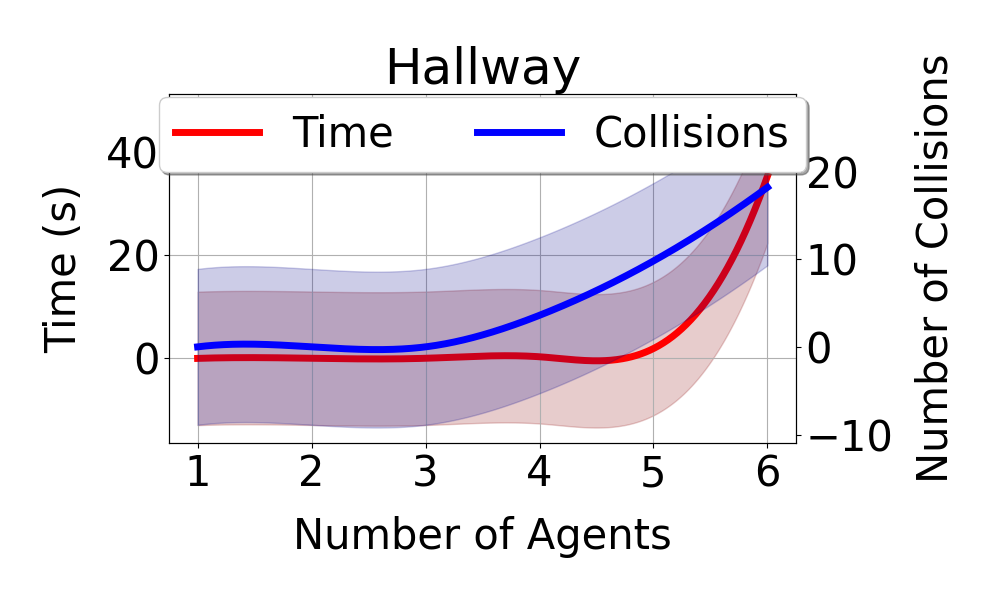}&     \includegraphics[width=0.275\textwidth]{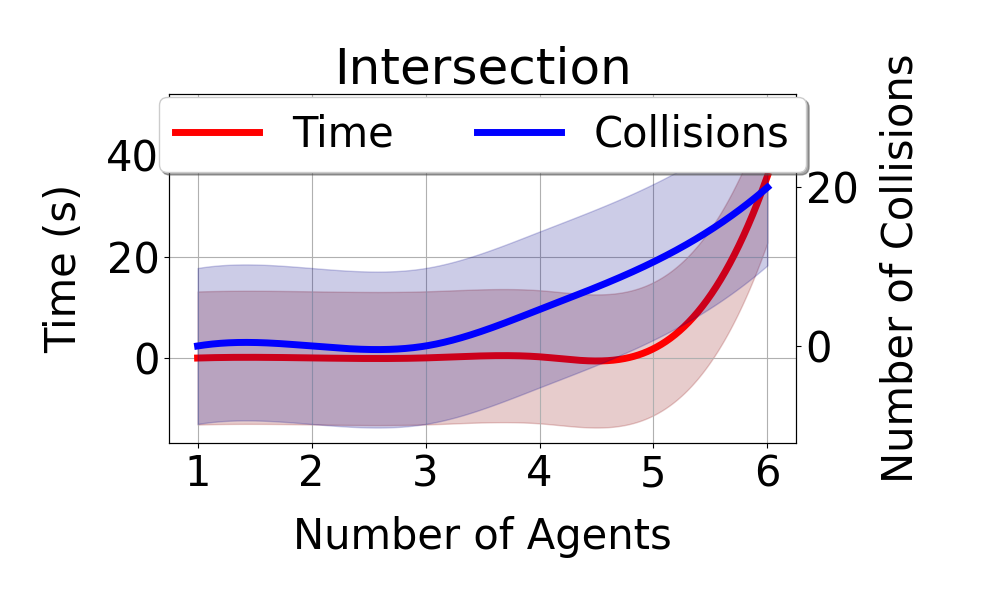} \\
      &  \includegraphics[width=0.275\textwidth]{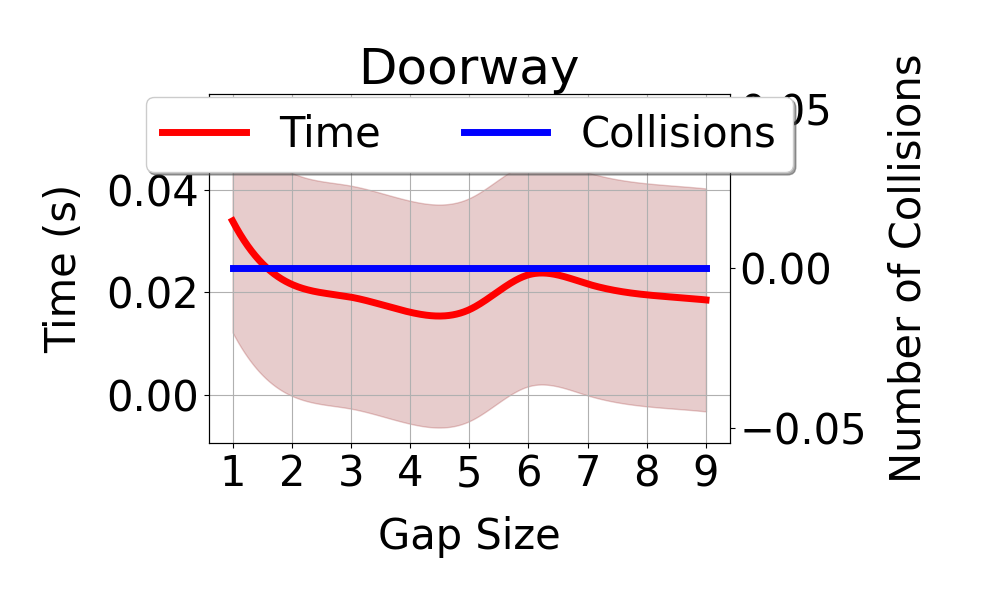} &     \includegraphics[width=0.275\textwidth]{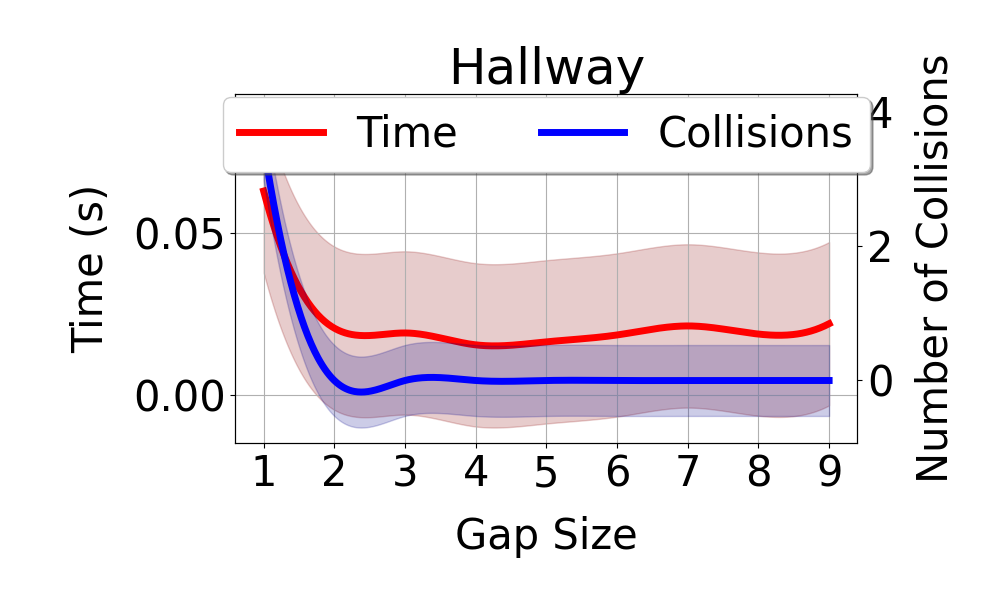}&     \includegraphics[width=0.275\textwidth]{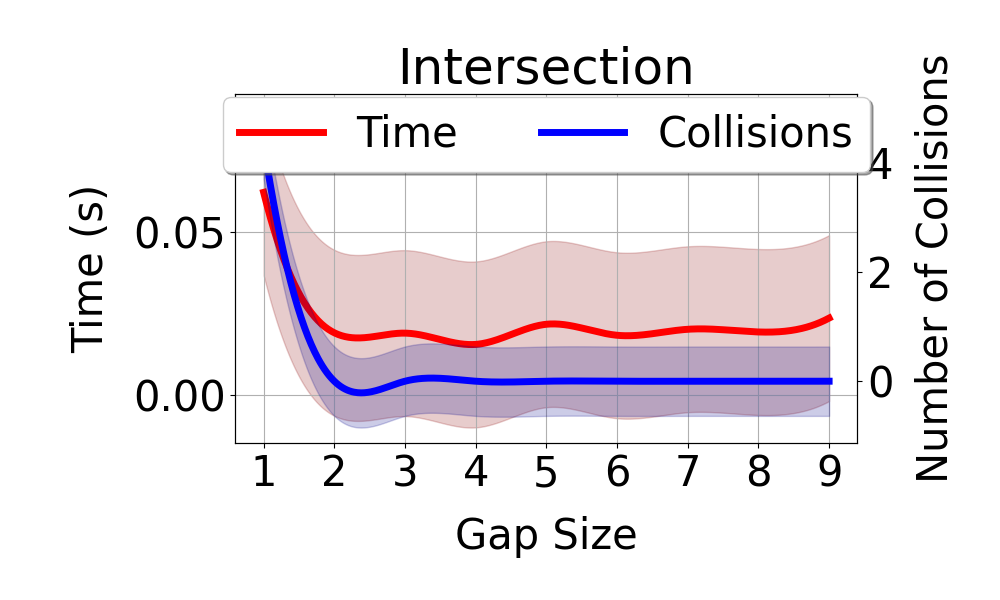} \\
     \bottomrule
\end{tabular}
\caption{\textbf{Optimal search-based planning in \model:} We analyze optimal search-based MAPF solvers such as conflict-based search (CBS)~\cite{CBS} and its variant, CBS-random~\cite{ecbs}, in the \model regime. See Section~\ref{subsec: exp-1} for a detailed analysis.}
  \label{fig: CBS_main}
  \vspace{-10pt}
\end{figure*}

\subsection{Mechanism Design-based Planning is Strategy-Proof}

One of the main challenges with MAPF with strategic agents is to simultaneously minimize the cost to the overall system-wide objective (Equation~\ref{eq: socialmapf_objective_welfare}) and maximize each agent's utility functions (Equation~\ref{eq: utility_social}). In this section, we empirically confirm that mechanism design (more specifically, auctions) satisfies both objectives.

In Figures~\ref{fig: social_welfare_doorway},~\ref{fig: social_welfare_hallway}, and~\ref{fig: social_welfare_intersection}, we compute the system welfare for auction-based planning and CBS-random for the doorway, narrow hallway, and corridor intersection scenarios. As before, we vary the number of agents from $4$ to $50$ and average the results over $100$ trials. The solid color lines represent the mean value with the shaded regions representing the confidence intervals. The blue line corresponds to the system welfare when the bidding strategy is optimally according to Theorem~\ref{thm: optimality}. The red line, on the other hand, corresponds to the system welfare when the bidding strategy is chosen randomly. The linear relationship with increasing number of agents is expected since the social welfare is a monotonically increasing linear function.  

In addition, in Figure~\ref{fig: utility}, we plot the utility curves for $4$ agents performing \model in the narrow hallway scenario as depicted in Figure~\ref{fig: demo_figure}. On the vertical axis, we plot the utility values corresponding to bids (horizontal axis) made by each agent. The black circles indicate the agents' true incentives. We observe that bidding the true incentive yields the maximum utility (highest points on each corresponding curve). Note that in simulated settings, the private incentives can be chosen arbitrarily. In real world navigation with robots, the incentives would need to be more carefully assigned and is a direction of future work.

\subsection{Extending Mechanism Design-based Planning to \model with Static Obstacles}

We test mechanism design-based planning in \model environments with static obstacles randomly placed throughout the $10\times 10$ grid as visualized in Figures~\ref{fig: obs_vis_1}-\ref{fig: obs_vis_5}. As before, we benchmark the runtime and number of collisions by averaging over $100$ trials each. In Figure~\ref{fig: obstacles}, in each graph, we vary the number of agents on the horizontal axis from $4$ to $15$ never exceeding more than $5$ seconds in each case. We vary the number of obstacles from $10$ to $25$ and observe $0$ collisions. Note that both the initial configurations of the agents as well as the obstacles are random. We observe that auction-based planning has no trouble in scaling in terms of number of obstacles and number of agents.

\subsection{Optimal Search-based algorithms in \model}
\label{subsec: exp-1}

We investigate the performance of an optimal search-based algorithm such as conflict-based search (CBS)~\cite{CBS} in \model. Since CBS is originally designed for the classical MAPF formulation with cooperative agents with full observability, we model uncertainty for the private incentives in the CBS implementation by normally distribute the edge costs in the planning stage and allowing multi-hop path traversal during the execution stage. In Figure~\ref{fig: CBS_main}, the columns correspond to the doorway, narrow hallway, and corridor intersection, respectively. We benchmark the CBS algorithm (rows $1$ and $2$) and its variant, called CBS-random (rows $3$ and $4$), where conflicted nodes with equal costs for the child nodes are selected randomly (as opposed to first-come first-serve). The first and third rows of Figure~\ref{fig: CBS_main} test these solvers as the number of strategic agents increases and the second and fourth rows test the effect of the constrained geometry of the environment (specifically, we increase the gap size of the doorway or hallway). In all experiments, we average the results over $100$ trials. The solid color lines represent the mean value and the associated shaded regions represent the confidence intervals. 

In general, we observe that CBS and CBS-random scale poorly as the number of agents increases and no longer guarantee collision-free trajectories. For more than $4$ agents, search-based methods fail to return a solution within the allotted time ($20$ seconds). Furthermore, we observe that CBS incurs up to $50$ collisions in the intersection environment. To test the effect of the social navigation environment, we vary the gap size from $1$ (narrow gap) to $9$ (open environment) and, as expected, observe that collisions are more when the environment is constrained (smaller gap sizes). Since in this experiment, we keep the number of the agents fixed at $3$, we do not observe variations in the runtime.

\section{Conclusion, Limitations, and Future Work}
\label{sec: conclusion}

We presented a new framework for multi-agent path finding (MAPF) in social navigation scenarios like doorways, elevators, hallways, and corridor intersections among agents with private incentives. We show that existing search-based MAPF solvers are unable to model these unknown incentives and result in collisions due to the inherent uncertainty in agent intentions. Furthermore, we show that these existing methods are unable to provide efficiency and optimality guarantees. Our solution consists of an auction-based approach where we resolve conflicts by incentivising agents to reveal their true intents and executing a conflict resolution protocol based on these true intents. We show that our auction planner results in zero collisions and is $20\times$ more efficient than search-based solvers. 

Our work is intended to foster new research directions in artificial intelligence and robotics while simultaneously shedding light on hidden connections with existing, and seemingly disconnected, research in machine learning and computer vision. In each domain, there are fundamental open problems and research directions that need to be solved in order to achieve actual physical robots navigating among humans in the real world. \model is expected to benefit researchers in the human-robot interaction, artificial intelligence, robotics communities working towards social robot navigation. We outline several independent research themes below that build upon the \model framework:

\begin{enumerate}
    \item \textit{Extending \model solvers to continuous space-time domains:} This work proposed the simple case of discrete time and discrete space. While \model encodes velocity constraints in the utility function in Equation~\ref{eq: utility_social}, thereby facilitating kinodynamic planning, there are rich opportunities in terms of how collisions would need to be redefined.
    
    \item \textit{Modeling uncertainty in agent state-spaces:} This paper modeled uncertainty by sampling edge costs in graph search-based MAPF solvers from a random Gaussian distribution. There exists, however, an entire field of study devoted to planning under uncertainty from which additional models of uncertainty can be used.
    
    \item \textit{Human incentive estimation:} In this work, we use simple heuristics like the velocity of the agents and their time-to-goal to determine their incentives or priority towards their goals. However, more sophisticated estimation techniques including bayesian models, neural networks, and computer vision may also be employed.
    
    \item \textit{Better motion planners:} We used artificial potential field approaches for the motion planner and leveraged heuristic strategies to escape local minima. The greedy one step look-ahead approach, however, can result in trajectories that, while game-theoretically optimal, may not necessarily be trajectories that a human would pick in the first place. This shortsightedness is mitigated by search-based approaches which, unfortunately, are susceptible to risk-aware planning. A hybrid combination of both approaches is an interesting direction.
\end{enumerate}

%%%%%%%%%%%%%%%%%%%%%%%%%%%%%%%%%%%%%%%%%%%%%%%%%%%%%%%%%%%%%%%%%%%%%%%%%%%%%%%
% \section*{APPENDIX}
% \input{sections/dataset}
% \input{QualitativeResults/qualitative2}
% \input{QualitativeResults/qualitative3}

%%%%%%%%%%%%%%%%%%%%%%%%%%%%%%%%%%%%%%%%%%%%%%%%%%%%%%%%%%%%%%%%%%%%%%%%%%%%%%%%
% TBD

% \section*{ACKNOWLEDGMENT}

% TBD

%%%%%%%%%%%%%%%%%%%%%%%%%%%%%%%%%%%%%%%%%%%%%%%%%%%%%%%%%%%%%%%%%%%%%%%%%%%%%%%%

{\footnotesize
\bibliographystyle{IEEEtran}
\bibliography{refs}
}

\end{document}